\documentclass[10pt]{article}
\usepackage{dcolumn}
\usepackage{bm}
\usepackage{verbatim}       

\usepackage[pdftex]{graphicx}
\usepackage{amsthm}
\usepackage{amssymb}
\usepackage{undertilde}

\usepackage{amstext}
\usepackage{amsmath}
\usepackage{amsfonts}
\usepackage{units}
\usepackage{mathrsfs}
\usepackage{cite}
\newcommand{\p}{\partial}

\newcommand{\dd}{{\rm d}}

\newcommand{\bd}{\begin{definition}}                
\newcommand{\ed}{\end{definition}}                  
\newcommand{\bc}{\begin{corollary}}                 
\newcommand{\ec}{\end{corollary}}                   
\newcommand{\bl}{\begin{lemma}}                     
\newcommand{\el}{\end{lemma}}                       
\newcommand{\bp}{\begin{proposition}}            
\newcommand{\ep}{\end{proposition}}                
\newcommand{\bere}{\begin{remark}}                  
\newcommand{\ere}{\end{remark}}                     

\newcommand{\bt}{\begin{theorem}}
\newcommand{\et}{\end{theorem}}

\newcommand{\be}{\begin{equation}}
\newcommand{\ee}{\end{equation}}

\newcommand{\bit}{\begin{itemize}}
\newcommand{\eit}{\end{itemize}}
\newtheorem{theorem}{Theorem}[section]
\newtheorem{corollary}[theorem]{Corollary}
\newtheorem{lemma}[theorem]{Lemma}
\newtheorem{proposition}[theorem]{Proposition}
\theoremstyle{definition}
\newtheorem{definition}[theorem]{Definition}
\theoremstyle{remark}
\newtheorem{remark}[theorem]{Remark}


\hyphenation{Lo-ren-tzian}
\hyphenation{Min-kow-ski}
\hyphenation{achro-nal}
\hyphenation{para-met-riz-a-tion}
\hyphenation{para-met-riz-a-tions}

\begin{document}




\title{Lorentzian manifolds properly isometrically embeddable in Minkowski spacetime}

\author{E. Minguzzi\thanks{
Dipartimento di Matematica e Informatica ``U. Dini'', Universit\`a
degli Studi di Firenze, Via S. Marta 3,  I-50139 Firenze, Italy.
E-mail: ettore.minguzzi@unifi.it. } }

\date{}

\maketitle

\begin{abstract}
\noindent I characterize the Lorentzian manifolds properly isometrically embeddable in Minkowski spacetime (i.e.\ the  Lorentzian submanifolds of Minkowski spacetime that are also closed subsets). Moreover, I prove that the Lorentzian manifolds that can be properly conformally embedded in Minkowski spacetime coincide with the globally hyperbolic spacetimes. Finally, by taking advantage of the embedding, I obtain an infinitesimal version of the distance formula.
\end{abstract}

\section{Introduction}

In  this work we investigate the problem of characterizing the topologically closed connected Lorentzian   submanifolds of Minkowski spacetime or their conformal structure.

It is convenient to start by introducing some terminology and notations. Our convention for the Lorentzian signature is $(-,+,\cdots,+)$. A {\em spacetime } is a connected time-oriented Lorentzian manifold (second countable and Hausdorff) of dimension $n$.
A vector $v$ on a spacetime $(M,g)$ is {\em causal} or {\em nonspacelike} if $g(v,v)\le 0$ and $v\ne 0$, and {\em timelike} if the strict inequality holds. We might write, for $v$ causal vector,  $\Vert v\Vert_g:=\sqrt{-g(v,v)}$. The $N+1$-dimensional Minkowski spacetime is denoted $\mathbb{L}^{N+1}$ and its metric is $\eta^{(N+1)}$.

We recall  that a $C^1$ function $f\colon M\to \mathbb{R}$ is called {\em temporal} if it has past-directed timelike gradient, or equivalently if $\dd f$ is positive over the future-directed causal vectors,   while it is {\em steep} if it satisfies $\dd f(v)\ge \sqrt{-g(v,v)}$ for every future-directed causal vector $v$ ({\em strictly steep} if the inequality is strict), or equivalently \cite[Thm.\ 1.23]{minguzzi18b}, if $f$ is temporal and $\Vert \nabla^g f\Vert_g\ge 1$. It is $h$-steep if $\dd f(v)\ge \sqrt{h(v,v)}$ for every future-directed causal vector $v$, where $h$ is a Riemannian metric.

A spacetime is {\em stably causal} if the cones can be widened by preserving the causality condition (i.e.\ the absence of closed causal curves), or equivalently, if the property of causality is stable in the  $C^0$ topology on  metrics \cite{hawking73}. A spacetime admits a temporal function iff it is stably causal \cite{bernal04b,bernal04,fathi12,chrusciel13,minguzzi17,bernard18}.

A spacetime is {\em stable} if both  causality and the finiteness of the Lorentzian distance are stable in the  $C^0$ topology on  metrics \cite{minguzzi17}.

It is clear that a connected Lorentzian manifold embedded in Minkowski spacetime inherits a time orientation from that of Minkowski spacetime. 
   Thus, without loss of generality, we can restrict ourselves to the problem of studying the embeddings of spacetimes into Minkowski spacetime.

The problem of characterizing the spacetimes $(M,g)$ isometrically embeddable in Minkowski spacetime $\mathbb{L}^{N+1}$ for some $N$, has been fully solved in recent years (for the analogous embedding problem  in generic pseudo-Riemannian manifolds see \cite{clarke70}).

We recall that an embedding is a map $\phi: M\to  \mathbb{L}^{N+1}$ which is a homeomorphism onto its image such that $\phi_*$ is injective. It is an isometry if $g=\phi^* \eta^{(N+1)}$, and a conformal isometry if $g=  \Omega^2 \phi^* \eta^{(N+1)}$, for some function $\Omega: M\to \mathbb{R}$.

By a result due to Whitney every $C^1$ manifold admits a  unique smooth compatible structure
(Whitney \cite{whitney36,whitney44})  \cite[Thm.\ 2.9]{hirsch76} thus in this type of results, as $M$ is tacitly assumed to be $C^{k+1}$, $k\ge 0$ (otherwise $C^k$ metrics do not make sense), the regularity of $M$ is  often not mentioned as it can be promoted to $C^\infty$.

\begin{theorem} \label{bys}
Let $(M,g)$ be an $n$-dimensional spacetime $(M,g)$, where $g$ is $C^k$, $k\in \mathbb{N}\backslash\{1,2\} \cup \{\infty\}$, and let $s:=k$ for $k\ge 3$; $s:=1$ for $k=0$.  The next properties are equivalent:
\begin{itemize}
\item[(a)] there is  a $C^s$  isometric embedding in Minkowski spacetime $\mathbb{L}^{N+1}$ for some $N\ge 1$,
\item[(b)] $(M,g)$ admits a $C^s$ steep  function,
\item[(c)]  $(M,g)$ is stable.
 \end{itemize}
 In this case $N$ can be chosen to be $N(n,k)$, the optimal value  for the analogous Riemannian problem.
\end{theorem}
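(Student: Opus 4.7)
I will prove the cycle (a) $\Rightarrow$ (b) $\Rightarrow$ (c) $\Rightarrow$ (b) $\Rightarrow$ (a). The ``explicit'' directions (a) $\Leftrightarrow$ (b) rest on coordinate computations in $\mathbb{L}^{N+1}$ together with the Riemannian Nash embedding theorem; (b) $\Rightarrow$ (c) is a short $C^0$-stability argument; and (c) $\Rightarrow$ (b), the construction of a steep function out of bare stability, is the deep step I expect to be the main obstacle.

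The direction (a) $\Rightarrow$ (b) is obtained by pulling back the Minkowski time coordinate: if $\phi=(T,X^1,\ldots,X^N)$ is an isometric embedding, then $\tau:=T\circ \phi$ satisfies $\dd\tau(v)\geq \|v\|_g$ on future-directed causal $v$, since $\phi_*v$ is then future-directed causal in $\mathbb{L}^{N+1}$ and $g(v,v)=\eta^{(N+1)}(\phi_*v,\phi_*v)$. The direction (b) $\Rightarrow$ (c) follows by noting that a steep function is in particular temporal, so $(M,g)$ is stably causal, while integrating the steep inequality along a future-directed causal curve $\gamma$ from $p$ to $q$ gives $L(\gamma)\leq \tau(q)-\tau(p)$, so the Lorentzian distance is finite; after a mild rescaling, $\tau$ remains a steep function for any sufficiently small $C^0$ widening of $g$, so both properties persist, which is exactly stability.

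For (b) $\Rightarrow$ (a) the key reduction is to the Riemannian Nash theorem by way of the auxiliary metric
\be
h := g + 2\,\dd\tau\otimes \dd\tau.
\ee
Decomposing $v=\alpha\nabla^g\tau+w$ with $g(\nabla^g\tau,w)=0$ (so $w$ is $g$-spacelike or zero) and using the equivalent steepness condition $\|\nabla^g\tau\|_g^2\geq 1$ from \cite[Thm.\ 1.23]{minguzzi18b}, one computes $h(v,v)=\alpha^2\mu(2\mu-1)+g(w,w)$ with $\mu:=\|\nabla^g\tau\|_g^2$, which is strictly positive unless $v=0$; hence $h$ is a genuine $C^s$ Riemannian metric. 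The optimal Riemannian $C^s$ isometric embedding theorem now provides $\psi=(X^1,\ldots,X^{N(n,k)}):M\to \mathbb{R}^{N(n,k)}$ with $\sum_i(\dd X^i)^2 = h$, and I take
\be
\phi:=(\sqrt{2}\,\tau,\psi):M\to \mathbb{L}^{N(n,k)+1},
\ee
which satisfies $\phi^*\eta^{(N(n,k)+1)}=-2(\dd\tau)^2+h=g$ and inherits the embedding properties from $\psi$ because $\tau$ is continuous. This realises the optimal Riemannian dimension automatically.

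The main obstacle is (c) $\Rightarrow$ (b). The works cited in the introduction already construct temporal functions on stably causal spacetimes, but upgrading to a truly \emph{steep} $\tau$ requires the distance-stability content of (c). The plan is to exploit stability to pick a widened metric $g'\supset g$ for which the Lorentzian distance is still finite, to smooth the $g'$-Lorentzian distance from a base point on each relatively compact piece $K\subset M$ so that its gradient dominates $\|\cdot\|_g$ on $K$, and then to patch the resulting local steep candidates by a partition of unity, mixing with a background temporal function to preserve temporality and the inequality $\dd\tau(v)\geq \|v\|_g$ globally. The delicate point is maintaining the steep inequality through the smoothing and the gluing, and this is where the stability assumption on the distance must be invoked repeatedly.
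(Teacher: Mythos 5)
Your directions (a)~$\Rightarrow$~(b) and (b)~$\Rightarrow$~(a) follow the same route the paper attributes to M\"uller--S\'anchez (pull back $X^0$; embed $h=g+2\,\dd\tau\otimes\dd\tau$ by Nash and set $\phi=(\sqrt2\,\tau,\varphi)$), and your positivity computation for $h$ is correct. But the proposal has a genuine gap exactly where you flag the ``main obstacle'': (c)~$\Rightarrow$~(b) is not proved, and the plan you sketch would fail as stated. The Lorentzian distance $d_{g'}(o,\cdot)$ vanishes identically outside $J^+(o)$, so it is not temporal, let alone steep, on any full neighbourhood, and no single base point can give a global candidate; one needs something like countably many such ``towers'' combined so that every causal direction is controlled. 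Worse, patching local candidates with a partition of unity $\{\varphi_i\}$ destroys the differential inequality: $\dd(\sum_i\varphi_i f_i)=\sum_i\varphi_i\,\dd f_i+\sum_i f_i\,\dd\varphi_i$, and the second sum has no sign, which is precisely the classical difficulty in constructing temporal/steep functions. This implication is the deep content of the cited results (M\"uller--S\'anchez for $k\ge 3$, and Minguzzi's construction of smooth steep functions on stable spacetimes, which also covers $k=0$); the paper under review does not reprove it but quotes it, so your attempt is missing the one step that cannot be reduced to the coordinate computations.

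There is also a fixable but real regularity slip in (b)~$\Rightarrow$~(a): if $\tau$ is only $C^s$, then $\dd\tau$ is $C^{s-1}$ and $h=g+2\,\dd\tau\otimes\dd\tau$ is only $C^{\min(k,s-1)}$, not $C^s$; for finite $k\ge3$ Nash then yields a $C^{k-1}$ embedding (and for $k=3$ you land in the excluded $C^2$ class). The paper's remark that in (b) ``$C^s$ can be replaced by $C^1$'' is what repairs this: a $C^1$ steep function can be upgraded to a smooth one (rescale to make the inequality uniformly strict on unit causal vectors over compacta and approximate in the fine $C^1$ Whitney topology, as in the paper's proof of Thm.~\ref{arg} and of (b)~$\Rightarrow$~(a) in Thm.~\ref{mia}), after which $h$ is $C^k$ and the Nash step gives the stated $C^s$ embedding into $\mathbb{L}^{N(n,k)+1}$. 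Your (b)~$\Rightarrow$~(c) sketch is essentially right, but it too needs the pointwise-uniformity argument (choice of a continuous widening function via paracompactness) to justify that a rescaled steep function survives every metric in a fine $C^0$ neighbourhood.
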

Characterization (b) was proved by M\"uller and Sanchez in \cite[Thm.\ 1.1]{muller11}  for $k\ge 3$, while (c) was proved by the author in \cite[Thm.\ 3.10, 4.13]{minguzzi17}.  In (b) $C^s$ can be replaced by $C^1$. The spacetimes characterized by this theorem are also those for which the distance formula holds true \cite[Thm.\ 4.6]{minguzzi17}. The globally hyperbolic spacetimes admit a steep   function and as such are isometrically embeddable in Minkowski spacetime \cite{muller11,minguzzi16a} \cite[Thm.\ 3.12]{minguzzi17}, but the class of stable spacetimes is much larger. For instance, the distinguishing spacetimes that admit a finite and continuous Lorentzian distance are stable \cite[Cor.\ 4.1]{minguzzi17}.

\begin{remark}
The last statement of the theorem and also the case $k=0$ require some comment. The classical result by Nash on the isometric embedding of Riemannian manifolds \cite{nash54,nash56} \cite[Cor.\ 30, ]{delellis19} \cite{gromov70} is (the case $k=0$ is named after Nash and Kuiper):

\begin{theorem}
Let $\Sigma$ be a $C^\infty$  $n$-dimensional  non-compact manifold endowed with a $C^k$ metric $h$, $k\in \mathbb{N}\backslash\{1,2\} \cup \{\infty\}$, and let $s:=k$ for $k\ge 3$; $s:=1$ for $k=0$.  There is a $C^s$  isometric embedding $\varphi: \Sigma\to E^N$ for some $N>0$.
\end{theorem}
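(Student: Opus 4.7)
The plan is to split into the two regimes of the statement, $k\ge 3$ (Nash 1956) and $k=0$ (Nash--Kuiper 1955), after a common preliminary reduction. First, by Whitney's embedding theorem I would fix a proper $C^\infty$ embedding $\varphi_0\colon\Sigma\to E^{M}$, and after composing with a small dilation I may assume that the induced metric $h_0:=\varphi_0^*\delta$ is strictly short with respect to $h$, i.e.\ $h-h_0$ is positive-definite. The problem then reduces to producing an auxiliary map $\psi\colon\Sigma\to E^{N-M}$ with $\psi^*\delta=h-h_0$; setting $\varphi:=(\varphi_0,\psi)$ yields the desired isometric embedding.

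For $k\ge 3$ I would run Nash's Newton-type iteration on $\psi$. Each step linearises the quadratic equation $\varphi^*\delta=h$ around the current approximation $\varphi_i$; provided $\varphi_i$ is \emph{free} (a generic condition easily arranged by enlarging the ambient dimension), the linearised operator acting on normal perturbations admits a bounded algebraic right inverse obtained by solving pointwise linear systems in the first and second derivatives of $\varphi_i$. The inversion costs derivatives, so Nash interleaves the iteration with smoothing operators $S_{\theta_i}$, choosing the frequencies $\theta_i$ so that the smoothing errors and the quadratic linearisation errors balance; this gives convergence in $C^{s}$ and is the content of the Nash--Moser hard inverse function theorem.

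For $k=0$ I would instead use the Nash--Kuiper convex-integration scheme. Starting from a strictly short $C^\infty$ embedding, each step selects a direction in which the metric defect is still positive-definite and adds a highly oscillatory \emph{corrugation} in a normal direction, with amplitude and frequency chosen so as to absorb a fixed fraction of the defect while keeping the $C^1$-norm of the correction small and its $C^0$-norm even smaller. The resulting sequence converges uniformly together with its first derivatives to a $C^1$ embedding whose pullback is $h$.

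The non-compact hypothesis is handled by exhausting $\Sigma$ via compact sets $K_1\subset K_2\subset\cdots$ with $K_i\subset \mathrm{int}\,K_{i+1}$, and running the chosen iteration stage by stage with the correction introduced at stage $i$ supported away from $K_{i-1}$; convergence is then automatic on each compact subset and the extra ambient dimensions required by non-compactness can be absorbed into the universal bound $N(n,k)$. The principal obstacle in both regimes is the loss of derivatives in linearising the nonlinear map $\varphi\mapsto \varphi^*\delta$, which makes a naive implicit-function-theorem approach impossible: Nash--Moser smoothing overcomes it when $k\ge 3$, while in the $C^1$ case one relinquishes higher regularity altogether and exploits the \emph{h}-principle flexibility of $C^1$ maps through corrugations.
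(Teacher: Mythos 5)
The paper does not actually prove this statement: it is quoted as the classical Nash (and, for $k=0$, Nash--Kuiper) theorem, with the proof delegated entirely to the cited literature (Nash's papers, Gromov--Rokhlin, De Lellis's survey). Your proposal is therefore not an alternative route to anything argued in the paper; it is a summary of the same classical strategy those references follow: reduce via Whitney to a strictly short (and, for $k\ge 3$, free) embedding, then close the metric defect either by the Nash--Moser iteration with smoothing operators ($k\ge 3$) or by Nash--Kuiper corrugations ($k=0$). At the level of strategy it is accurate.

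As a proof, however, it is only an outline, and the parts you name without carrying out are precisely the hard ones. The convergence of the Newton-type scheme despite the loss of derivatives, the verification that freeness and embeddedness (injectivity, not merely immersivity) survive the limit, and the amplitude/frequency estimates in the convex-integration step are all asserted rather than established. The non-compact case is treated too lightly: simply supporting the stage-$i$ correction away from $K_{i-1}$ does not suffice, since the metric defect on overlap regions has to be redistributed without destroying shortness or freeness, and keeping the ambient dimension bounded independently of the exhaustion is exactly the delicate point in Nash's and Gromov--Rokhlin's treatments --- the bound $N(n,k)$ for non-compact $\Sigma$ quoted in the paper is the outcome of that analysis, not something that can be ``absorbed'' by fiat. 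So: right strategy, coinciding with the sources the paper cites, but a roadmap rather than a proof; for the purposes of this paper the statement should simply be cited, as the author does.
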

The optimal value for $N$ will be denoted $N(k,n)$ and for $\Sigma$ non-compact it is known to satisfy $N(k,n)\le \frac{1}{2} (n+1) (n (3 n+11)+4)$, but the actual bound will not be important in what follows.


 As we shall also later recall  in more detail,  the proof of the Lorentzian version \cite{muller11} clarifies that in the Lorentzian case the embedding can be found of the form $\phi=(\sqrt{2} t,\varphi)$ where $\varphi$ is the Nash embedding of a related Riemannian metric $h$ on $M$ while $t$ is the smooth function in Thm.\ \ref{bys}(b). Thus the Lorentzian embedding has the same regularity of the Nash embedding and the ambient manifold can be chosen to be $\mathbb{L}^{N(n,k)+1}$.


The $k=0$  case was not comprised in the analysis of \cite{muller11}.
In \cite{minguzzi17} we proved that stable spacetimes with $C^0$ metrics admit smooth steep functions, and so the Nash-Kuiper theorem can be used to find $C^1$ embeddings in $\mathbb{L}^{N(n,0)+1}$.
\end{remark}

The problem of the {\em conformal} embedding is also understood.
The following result is proved in \cite[Cor.\ 1.4]{muller11}, see \cite[Thm.\ 2.62]{minguzzi17} for the $k=0$ case.

\begin{theorem} \label{bys2}
Let $(M,g)$ be an $n$-dimensional spacetime, where $g$ is $C^k$, $k\in \mathbb{N}\backslash\{1,2\} \cup \{\infty\}$, and let $s:=k$ for $k\ge 3$; $s:=1$ for $k=0$.   The next properties are equivalent
\begin{itemize}
\item[(a)] there is  a $C^s$ conformal embedding in Minkowski spacetime $\mathbb{L}^{N+1}$ for some $N\ge 1$,
\item[(b)] $(M,g)$ admits a $C^s$ temporal function,
\item[(c)]  $(M,g)$ is stably causal.
 \end{itemize}
 In this case $N$ can be chosen to be $N(n,k)$, the optimal value  for the isometric Riemannian problem.
\end{theorem}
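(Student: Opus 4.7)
The plan is to reduce the theorem to Theorem \ref{bys} by means of a conformal rescaling. The equivalence (b)$\Leftrightarrow$(c) is the classical Bernal--S\'anchez/Fathi--Siconolfi characterization of stable causality via temporal functions, already cited in the excerpt, so the genuine work concerns (a)$\Leftrightarrow$(b).

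For (a)$\Rightarrow$(b) I would, given a $C^s$ conformal embedding $\phi:M\to\mathbb{L}^{N+1}$, simply set $t:=x^0\circ\phi$, pulling back the Minkowski time coordinate. Since a conformal embedding preserves null cones and the time orientation of $M$ is inherited from Minkowski, $\phi_*v$ is future-directed causal in $\mathbb{L}^{N+1}$ whenever $v$ is future-directed causal in $M$; the temporality of $x^0$ then gives $\dd t(v)=\dd x^0(\phi_*v)>0$ for every such $v$, so $\nabla^g t$ is past-directed timelike and $t$ is a $C^s$ temporal function.

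For (b)$\Rightarrow$(a) the idea is to conformally shrink the metric until the given temporal $t$ becomes steep. From the reverse Cauchy--Schwarz inequality, using that $\nabla^g t$ is past-directed timelike, one obtains
\[
\dd t(v)=g(\nabla^g t,v)\ge \Vert\nabla^g t\Vert_g\sqrt{-g(v,v)}
\]
for every future-directed causal $v$. I would pick a smooth positive function $\lambda:M\to(0,\infty)$ with $\lambda(p)\le \Vert\nabla^g t(p)\Vert_g$ everywhere, which exists because a positive continuous function always admits a smooth positive minorant, and set $\tilde g:=\lambda^2 g$. Then $t$ is $C^s$ steep for $\tilde g$ since $\dd t(v)\ge \lambda\sqrt{-g(v,v)}=\sqrt{-\tilde g(v,v)}$ on all future-directed causal $v$ (the causal cones of $g$ and $\tilde g$ agree). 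Theorem \ref{bys}(b)$\Rightarrow$(a) then delivers a $C^s$ isometric embedding $\phi:(M,\tilde g)\to \mathbb{L}^{N(n,k)+1}$, and the identity $\phi^*\eta^{(N+1)}=\tilde g=\lambda^2 g$ exhibits $\phi$ as a $C^s$ conformal embedding of $(M,g)$.

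The only subtlety I foresee is the regularity bookkeeping, most visible in the $k=0$ case where $\Vert\nabla^g t\Vert_g$ is only continuous; but taking $\lambda$ smooth keeps $\tilde g$ in the same $C^k$ class as $g$, so the $k=0$ branch of Theorem \ref{bys} applies verbatim. The optimality $N=N(n,k)$ is then inherited directly from Theorem \ref{bys}.
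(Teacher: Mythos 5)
Your argument is correct and follows essentially the route the paper indicates: a conformal rescaling reduces Theorem \ref{bys2} to Theorem \ref{bys}, the only immaterial difference being that you enter Theorem \ref{bys} through its steep-function characterization (b) --- making the given temporal function steep by rescaling with a smooth minorant $\lambda\le\Vert\nabla^g t\Vert_g$ --- whereas the paper's cited proof phrases the rescaling as making the spacetime stable, i.e.\ characterization (c), which by Theorem \ref{bys} is the same condition. Your regularity bookkeeping (smooth $\lambda$, so $\tilde g=\lambda^2 g\in C^k$, including the $k=0$ case) and the bound $N=N(n,k)$ inherited from Theorem \ref{bys} are handled correctly.
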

For the proof one shows that the spacetime becomes stable after a suitable conformal rescaling of the metric \cite[Thm.\ 2.62]{minguzzi17}.
Once again $C^s$ in (b) can be replaced by $C^1$.

\begin{remark} \label{jvo}
Let $[g]$ denote the conformal class of $g$.
What are the spacetimes $(M,g)$ such that for every $\tilde g\in [g]$, $(M,\tilde g)$ is isometrically embeddable in Minkowski spacetime? Any isometrically embeddable spacetime can be easily shown to be stably causal with finite Lorentzian distance  \cite{muller11}. It is known that if the spacetime is stably causal and the Lorentzian distance is finite for every element of the conformal class,  then the spacetime is globally hyperbolic  \cite[Thm.\ 4.30]{beem96}. Conversely, given a globally hyperbolic spacetime, every element in the conformal class is globally hyperbolic (for the causal structure is independent of the conformal factor) thus isometrically embeddable (because the globally hyperbolic spacetimes admit steep time function \cite{muller11,minguzzi16a}). The answer to the question is then: the globally hyperbolic spacetimes \cite{muller11}.
\end{remark}

In this work we are going to consider the analogous problems for {\em proper} isometric/conformal emeddings in Minkowski spacetime. We recall that a continuous mapping is said to be {\em proper} if the inverse
images of compact sets are compact. For an embedding this condition is equivalent to the topological closedness of the image \cite{gromov70} (i.e.\ no boundary point) a fact that will be used without further mention in what follows. For this reason we shall also speak of {\em closed} embeddings. In other words we shall be interested in the characterization of the closed Lorentzian submanifolds of Minkowski spacetime, and on the conformal structures that can be similarly embedded in Minkowski spacetime.



String theorists make current use of the notion of embedded spacetime, though they refer to them as {\em branes} \cite{moore05}. Embedded spacetimes with no boundary seem to be very natural objects. Near a boundary point the submanifold can oscillate wildly, which is why the Lorentzian submanifolds of Minkowski with boundary get identified with the large category of stable spacetimes. Much nicer spacetimes are expected with the imposition of the no boundary condition.

Among the nicest spacetimes we find the globally hyperbolic ones. However, M\"uller has shown that there are simple globally hyperbolic spacetimes that are not properly isometrically embeddable in Minkowski \cite[Example 1]{muller13}. As a consequence, there must exist some  category of spacetimes which is  more restrictive than that of globally hyperbolic spacetimes.

Still we face a problem: can the spacetimes properly isometrically embeddable in Minkowski be characterized through intrinsic properties?
In this work we are going to prove that they can, so pointing to a new category of spacetimes that might play a significant role in physics.

\subsection{Proper embeddings in Euclidean space}

Let us first consider the Riemannian case.
In \cite[p. 11--12]{gromov70} Gromov and Rokhlin stated that the Nash isometric embedding can be found proper if the Riemannian manifold is complete.
A clear and simple proof of this result was obtained by  M\"uller \cite{muller09}.
Let us rephrase it emphasizing the arguments that shall be useful in what follows.

\begin{theorem} \label{arg}
Let $(S,h)$ be a  Riemannian manifold with  $h\in C^k$, $k\in \mathbb{N}\backslash\{1,2\} \cup \{\infty\}$, and let $s:=k$ for $k\ge 3$; $s:=1$ for $k=0$.
It is properly isometrically $C^s$  embeddable  in $E^N$ for some $N$ iff there is a $C^1$ proper function $r\colon S \to (0,+\infty)$ such that\footnote{Here $\Vert \nabla^h r \Vert_h:=h^{-1}(\dd r, \dd r)$ thus the expression makes sense for a $C^0$ metric $h$.} $\Vert \nabla^h r \Vert_h<1$. In the last sentence $C^1$ can be replaced by {\em smooth}.
\end{theorem}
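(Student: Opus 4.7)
The plan is to establish both directions separately, with the regularity upgrade from $C^1$ to smooth handled as a final reduction.

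For the only-if direction, suppose $\varphi\colon S\to E^N$ is a proper isometric embedding and fix any $p\in E^N$. I would set
\[
r(x):=\sqrt{1+|\varphi(x)-p|^2}.
\]
This is smooth and strictly positive. It is proper since $\varphi$ is proper and $y\mapsto\sqrt{1+|y-p|^2}$ is a proper function on $E^N$. Because $\varphi$ is an isometric embedding, $\nabla^h r$ is the pullback of the tangential component of the Euclidean gradient of $\sqrt{1+|y-p|^2}$, which has Euclidean norm $|y-p|/\sqrt{1+|y-p|^2}<1$. Hence $\Vert\nabla^h r\Vert_h<1$, as required.

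For the if direction, first assume that $r$ is smooth. The central idea is to pass to the auxiliary metric
\[
\tilde h:=h-\dd r\otimes \dd r,
\]
which is positive definite precisely thanks to the strict pointwise inequality $\Vert\nabla^h r\Vert_h<1$, and which inherits the $C^k$ regularity of $h$. By the Riemannian Nash theorem (Nash-Kuiper in the $k=0$ case) there is a $C^s$ isometric embedding $\varphi_1\colon (S,\tilde h)\to E^{N-1}$. I would then define
\[
\Phi(x):=(\varphi_1(x),\,r(x))\in E^N,
\]
so that $\Phi^{*}(\text{Eucl})=\varphi_1^{*}(\text{Eucl})+\dd r\otimes \dd r=\tilde h+\dd r\otimes \dd r=h$. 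Thus $\Phi$ is a $C^s$ isometric injective immersion; it is moreover proper, because for any compact $K\subset E^N$ the preimage $\Phi^{-1}(K)$ lies in $r^{-1}(K')$ with $K'\subset(0,\infty)$ compact, hence is compact. A proper injective immersion is a closed embedding.

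To reduce a general $C^1$ proper $r$ to the smooth case (the reverse implication being trivial), I would use density of $C^\infty$ functions in $C^1(S)$ in the fine $C^1$ topology: for any continuous $\varepsilon\colon S\to(0,\infty)$ there exists a smooth $\tilde r$ with $|\tilde r-r|<\varepsilon$ and $\Vert\nabla^h(\tilde r-r)\Vert_h<\varepsilon$ pointwise. Choosing
\[
\varepsilon(x):=\tfrac12\min\bigl\{r(x),\;1-\Vert\nabla^h r\Vert_h(x)\bigr\},
\]
which is positive and continuous by hypothesis, preserves positivity, properness (since $\tilde r$ stays within a bounded $\varepsilon$-neighborhood of the proper function $r$), and enforces $\Vert\nabla^h\tilde r\Vert_h<1$ pointwise. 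The main obstacle I anticipate is the if direction: one must confirm that $\tilde h$ really is a $C^k$ Riemannian metric to which Nash applies, which is exactly where the $C^1\to C^\infty$ upgrade of $r$ is needed (otherwise $\dd r\otimes\dd r$ is merely $C^0$ and the regularity chain collapses), and one must notice the clean trick of reserving the last coordinate of $\Phi$ for $r$, which purchases properness for free from the (generally non-proper) Nash embedding $\varphi_1$.
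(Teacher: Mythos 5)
Your proposal is correct and follows essentially the same route as the paper: the proper function $\sqrt{1+|\varphi-p|^2}$ pulled back from the ambient space, the auxiliary metric $\tilde h=h-\dd r\otimes\dd r$ fed into Nash with $r$ reserved as the extra coordinate to force properness, and a Whitney-topology smoothing of $r$ at the end. The only trivial slips are calling $r$ smooth in the only-if direction (it is only $C^s$, i.e.\ $C^1$ when $k=0$, which suffices) and asserting rather than checking positive definiteness of $\tilde h$ (a one-line Cauchy--Schwarz estimate), neither of which affects the argument.
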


We recall that a function is proper if the inverse image of any compact set is compact. Observe that it could be $\nabla^h r= 0$ at some point.

\begin{proof}
For the last statement, by \cite[Thm.\ 2.6]{hirsch76}  $C^\infty(S, \mathbb{R})$ is dense in $C^1(S, \mathbb{R})$ endowed with the Whitney
strong topology \cite[p. 35]{hirsch76}, thus we can find $r'\in  C^\infty(S, \mathbb{R})$, which approximates $r$, up to the
first derivative, as accurately as we want over $S$, hence in such a way that $\vert r'-r\vert<1$ and $\Vert \nabla^h r \Vert_h<1$, the former inequality implying that $r'$ is also proper.

$\Rightarrow$. Let $\{e_i\}$ be an orthonormal basis for $E^N$ and let  $\{X^i\}$  be the associated Cartesian coordinates of $E^N$. The manifold $S$ can be regarded as a Riemannian submanifold with induced metric $h$.  Let
\[
R=\sqrt{1+\vec{X}\cdot \vec{X}},
\]
and let $r=R\vert_S$, so that $r$ is $C^s$ hence $C^1$. Observe that
\[
\nabla^E R=\sum_i \frac{X^i}{R} e_i,
\]
thus $\Vert  \nabla^E R\Vert_E= \sqrt{\frac{\vec{X}^2}{1+\vec{X}^2}}<1$, thus for every $v\in TE^N$, $v \ne \vec{0}$, we have by the Cauchy-Schwarz inequality
\[
\vert \p_v R \vert=\vert \nabla^E R\cdot_E v\vert <\Vert v \Vert_E.
\]
For every $v\in TS$, $v\ne \vec{0}$, we have as a consequence, $\vert\p_v r \vert <\Vert v\Vert_h$. If $\nabla^h r\ne 0$, then with $v=\nabla^h r$,
\[
\vert h^{-1}(\dd r,\dd r)\vert=\vert \dd r(v)\vert<\Vert \nabla^h r\Vert_h=\sqrt{h^{-1}(\dd r,\dd r)},
\]
thus $\Vert \nabla^h r \Vert_h=\sqrt{h^{-1}(\dd r,\dd r)}<1$. Moreover, $r$ is proper because for $a>0$, $r^{-1}([0,a])=R^{-1}([0,a])\cap S$. This set is closed because $R$ is continuous and $S$ is closed. Moreover, it is contained in the ball $B_E(0,a)$ thus it is compact.

$\Leftarrow$. We can assume that $r$ is smooth. Consider the $C^k$ metric $\tilde h=h-\dd r \otimes \dd r$. We want to show that it is Riemannian. This is clear at those points where $\dd r=0$, so we can just focus on points where $\dd r\ne 0$. On the tangent spaces to the level sets of $r$ it is a positive quadratic form thus, in order to show that it is Riemannian, we need only to prove that $\tilde h(\nabla^h r, \nabla^h r)>0$ because
\[
\tilde h(\nabla^hr,\cdot)=h(\nabla^h r, \cdot)-h^{-1}(\dd r,\dd r )\, \dd r=\big(1-h^{-1}(\dd r,\dd r)\big)\, \dd r
\]
so that the diagonal terms vanish. Now
\[
\tilde h(\nabla^h r, \nabla^h r)=\big(1-h^{-1}(\dd r,\dd r)\big) h^{-1}(\dd r,\dd r),
\]
which is positive because $0<h^{-1}(\dd r, \dd r)=\Vert \nabla^h r\Vert_h^2<1$. Thus by Nash's theorem there is, for some $N>0$, a $C^s$ isometric embedding $\tilde \varphi\colon  S \to E^{N-1}$ such that $\tilde \varphi^* \delta^{(N-1)}=\tilde h$, where $\delta^{(N-1)}$ is the Euclidean metric in $E^{N-1}$. Thus $\varphi\colon S\to E^N$, $\varphi=(\tilde \varphi, r)$
\[
\varphi^* \delta^{(N)}=\tilde h+\dd r\otimes \dd r=h.
\]
The embedding  cannot have a boundary point otherwise the coordinate $X^{N}$ would be bounded in a neighborhood of it, which would imply that we could find $p_n\in S$, $p_n \to \infty$, with $r(p_n)$ bounded in contradiction with the properness of $r$.
\end{proof}

\begin{theorem}
Let $(S,h)$ be a  Riemannian manifold with  $h\in C^k$, $k\in \mathbb{N} \cup \{\infty\}$.
The  Riemannian manifold $(S,h)$ admits a $C^1$ (equiv.\ $C^\infty$) proper function $r\colon S\to [0,+\infty)$ such that $\Vert \nabla^h r\Vert_h<1$ iff it is complete.
\end{theorem}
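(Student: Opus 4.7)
My plan is to handle the two implications separately.

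For $(\Rightarrow)$, suppose such an $r$ exists. The pointwise bound on $\nabla^h r$ forces $r$ to be $1$-Lipschitz with respect to the intrinsic Riemannian distance $d_h$: along any piecewise $C^1$ curve $\gamma\colon [0,1]\to S$ joining $p$ to $q$, the Cauchy--Schwarz inequality gives
\[
|r(q)-r(p)|\le \int_0^1 |\dd r(\dot\gamma)|\,\dd t\le \int_0^1 \Vert \nabla^h r\Vert_h\,\Vert \dot\gamma\Vert_h\,\dd t\le L(\gamma),
\]
and taking the infimum over such $\gamma$ yields $|r(p)-r(q)|\le d_h(p,q)$. Consequently, along any $d_h$-Cauchy sequence $\{p_n\}$ the numbers $r(p_n)$ stay bounded, so by properness the sequence lies in a compact set and admits a convergent subsequence; being Cauchy, the whole sequence converges. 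Hence $(S,d_h)$ is a complete metric space and by Hopf--Rinow $(S,h)$ is complete.

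For $(\Leftarrow)$, assume $(S,h)$ is complete, fix $p_0\in S$, and consider the distance function $d(\cdot):=d_h(p_0,\cdot)$. By Hopf--Rinow, closed $d_h$-balls are compact, so $d$ is proper; it is also trivially $1$-Lipschitz, and in particular $\Vert \nabla^h d\Vert_h\le 1$ almost everywhere. The remaining task is to replace $d$ by a nearby smooth function with a slightly worse, but still strictly sub-unitary, gradient bound. I would appeal to the standard Greene--Wu smoothing of Lipschitz functions on Riemannian manifolds: for every $\epsilon>0$ one can construct a smooth $\tilde d\colon S\to \mathbb{R}$ with $\sup_S|\tilde d-d|<\epsilon$ and $\Vert\nabla^h\tilde d\Vert_h<1+\epsilon$ pointwise. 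Setting $r:=\tilde d/(1+\epsilon)$, shifted by an additive constant so that it is $[0,+\infty)$-valued, gives the required object: it is smooth, satisfies $\Vert \nabla^h r\Vert_h<1$, and it is proper since $r$ differs from $d/(1+\epsilon)$ by a bounded quantity, so each $r^{-1}([0,a])$ is contained in a compact sublevel set of $d$.

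The principal technical obstacle is the smoothing step with the sharp gradient bound $1+\epsilon$. In practice one mollifies $d$ in normal coordinate charts and patches the local smoothings together via a partition of unity subordinate to a locally finite cover; the chart diameters and mollification scales must be chosen small enough, in terms of $\epsilon$ and of the pointwise variation of $h$, that the cross terms produced by differentiating the partition of unity contribute less than $\epsilon$ to the gradient. Finally, the equivalence between the $C^1$ and $C^\infty$ versions of the hypothesis follows from the Whitney-topology density of smooth functions inside $C^1(S,\mathbb{R})$, exactly as in the opening argument of Theorem \ref{arg}.
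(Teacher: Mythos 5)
Your proposal is correct and follows essentially the same route as the paper: the forward direction is the same $1$-Lipschitz estimate combined with properness (the paper phrases completeness via compactness of closed balls, i.e.\ Heine--Borel, rather than via Cauchy sequences), and the reverse direction is the same scheme of smoothing the distance function $d_h(p_0,\cdot)$ with approximate control of the Lipschitz constant and then rescaling --- the paper simply cites the smooth Lipschitz-approximation result of Azagra--Ferrera--L\'opez-Mesas--Rangel \cite{azagra07} (with the cruder constants: a $2$-Lipschitz approximation divided by $4$) where you invoke Greene--Wu-type mollification. The only point worth flagging is that the statement allows $k=0$, where normal coordinates are unavailable and standard Hopf--Rinow needs the Hopf--Rinow--Cohn--Vossen theorem for locally compact length spaces (as the paper notes via \cite{burtscher15}), so your chart-by-chart smoothing sketch should be phrased in arbitrary charts rather than normal ones.
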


In the proof with {\em complete} we shall mean the Heine-Borel property: bounded closed subsets are compact. Nevertheless, all standard Hopf-Rinow equivalences familiar from the $C^2$ metric theory are preserved even  for $C^0$ metrics. Indeed, in this case  we have at our disposal the Hopf-Rinow-Cohn-Vossen's theorem which holds for locally compact length spaces. The Riemannian spaces with continuous metric are of this type \cite{burtscher15}.

\begin{proof}
$\Rightarrow$. Let $\sigma\colon [0,1] \to S$ be a regular $C^1$ curve connecting $p$ to $q$. We have
\[
\vert r(q)-r(p) \vert= \vert \int_0^1 \dd r(\dot \sigma) \dd t \vert\le \int_0^1  \vert \nabla^h r \cdot_h \dot \sigma \vert \dd t \le   \int_0^1  \Vert  \dot \sigma \Vert_h \dd t=\ell_h(\sigma)
\]
Thus, taking the infimum over $\sigma$ we get $\vert r(q)-r(p)\vert \le d^h(p,q)$. We conclude that the closed balls are compact since $d$ is continuous and $r$ is proper, hence $(M,S)$ is complete.

$\Leftarrow$.
Let $o\in S$ and let $f:=d(o,\cdot)$.
 By the triangle inequality $\vert f(q)- f(p)\vert \le d(p,q)$
 thus $f$ is 1-Lipschitz.
 By \cite{azagra07} we can find a smooth 2-Lipschitz function $g$ (thus $\Vert \nabla^h g\Vert_h\le 2$) such that $\vert g-f\vert \le 1$. Then the function $r=g/4$ is smooth proper and such that $\Vert \nabla^h r\Vert_h \le 1/2 <1$.
\end{proof}

We arrive at the theorem in \cite{muller09} improved with details on the regularity.

\begin{corollary} \label{odw}
Let $(S,h)$ be a  Riemannian manifold with  $h\in C^k$, $k\in \mathbb{N}\backslash\{1,2\} \cup \{\infty\}$, and let $s:=k$ for $k\ge 3$; $s:=1$ for $k=0$. It is properly isometrically $C^s$  embeddable in  Euclidean space $E^N$ for some $N$ iff it is complete.
\end{corollary}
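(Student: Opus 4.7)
The plan is simply to chain together the two theorems just proved in the excerpt, since this corollary is their direct concatenation. The first theorem (Thm.~\ref{arg}) characterizes proper isometric $C^s$ embeddability of $(S,h)$ into some Euclidean space $E^N$ by the existence of a $C^1$ (equivalently $C^\infty$) proper function $r\colon S\to(0,+\infty)$ with $\|\nabla^h r\|_h<1$. The second theorem characterizes the existence of such a function by completeness of $(S,h)$ in the Heine--Borel sense. Thus the proof reduces to observing that the two intermediate characterizations through $r$ match exactly.

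More concretely, I would write: for the $(\Rightarrow)$ direction, assume a proper isometric $C^s$ embedding $\varphi\colon S\to E^N$ exists; by Thm.~\ref{arg} there is a proper $C^1$ function $r$ on $S$ with $\|\nabla^h r\|_h<1$, and then by the second theorem $(S,h)$ is complete. For the $(\Leftarrow)$ direction, assume $(S,h)$ is complete; then by the second theorem there is a smooth proper $r\colon S\to[0,+\infty)$ with $\|\nabla^h r\|_h<1$, and a small positive constant shift produces such an $r$ taking values in $(0,+\infty)$ without affecting either properness or the gradient bound; by Thm.~\ref{arg} this yields the desired proper isometric $C^s$ embedding into some $E^N$.

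There is essentially no obstacle, since all the work has already been done in the two preceding theorems; the only minor subtlety is the cosmetic distinction between $r\colon S\to(0,+\infty)$ in Thm.~\ref{arg} and $r\colon S\to[0,+\infty)$ in the subsequent theorem, which is absorbed by adding a positive constant. I would also note in passing that, since the second theorem holds for all $k\in\mathbb{N}\cup\{\infty\}$ whereas Thm.~\ref{arg} requires $k\in\mathbb{N}\setminus\{1,2\}\cup\{\infty\}$, the latter is the binding constraint on regularity here, consistent with the statement of the corollary.
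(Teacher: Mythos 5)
Your proposal is correct and is exactly how the paper obtains the corollary: it is the direct concatenation of Theorem \ref{arg} with the completeness characterization of the proper function $r$, with no further argument given. The constant-shift remark handling the $(0,+\infty)$ versus $[0,+\infty)$ codomain is a harmless and valid way to match the hypotheses.
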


The proof shows that $N$ can be chosen to be $N(n,k)+1$, where $N(n,k)$ is the optimal value for the non-closed isometric embedding.

\begin{remark}
A complete Riemannian manifold might admit non-closed isometric embeddings into Euclidean space. For instance,
$\mathbb{R}$ admits non-closed isometric embeddings into $\mathbb{R}^2$: consider a curve spiraling to the origin of $\mathbb{R}^2$ or to its unit circle. In order to select a closed embedding the trick was to choose a proper  function as one coordinate for the embedding.


\end{remark}

\subsection{Proper embeddings in Minkowski spacetime}

 If we add a time coordinate to the example of the previous remark, we see that 1+1 Minkowski spacetime admits non-closed isometric embeddings into 2+1 Minkowski spacetime  though it is a well behaved spacetime. In order to prove that a spacetime $(M,g)$ admits a closed isometric embedding in Minkowski we need at least two functions on $M$, one function $t$ that goes to infinity in the timelike direction and the other function $r$ that goes to infinity in the spacelike direction.
Thus we need to obtain a Riemannian metric as follows
\[
g+\dd t^2-\dd r^2.
\]
If we were not interested on the properness of the embedding we would just try to find a Riemannian metric by using a Cauchy temporal function $t$ as follows
\[
g+\dd t^2
\]
which, contracting twice with $\nabla t=g^{-1}(\dd t, \cdot)$, gives
\[
g^{-1}(\dd t, \dd t )+g^{-1}(\dd t, \dd t )^2=g^{-1}(\dd t, \dd t ) [1+g^{-1}(\dd t, \dd t )] .
\]
Thus, since this has to be positive, we get $-g^{-1}(\dd t, \dd t )>1$ which is the strict steepness condition. Then one observes that on $\textrm{ker}\, \dd t$ the metric is also positive, and that the diagonal term vanish because contraction with $\nabla t$ just on the left gives
\[
[1+g^{-1}(\dd t, \dd t )]\, \dd t
\]
which vanishes on $\textrm{ker}\, \dd t$. This is, essentially, the strategy followed by M\"uller and S\'anchez in \cite{muller11}. Unfortunately, by working with just one function one cannot force the embedding to be closed, as the example of the above embedding of Minkowski  1+1 in Minkowski 2+1 shows.

We shall need the following results.

\begin{lemma} \label{nis}
Let $g$ a Lorentzian  bilinear form and let $u$ and $v$ be vectors such that $g(u,u),g(v,v)\ge 0$. If the inequality
\begin{equation} \label{bud}
[1+g(u,v)]^2\le g(u,u)\, g(v,v)
\end{equation}
holds, then $u$ and $v$ are causal and of the same time orientation.
If the inequality is strict then they are timelike.
\end{lemma}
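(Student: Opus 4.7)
The plan is to combine the hypothesis with reverse Cauchy--Schwarz for a Lorentzian form and read off the sign of $g(u,v)$, which in turn controls both causality and relative time orientation.

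First, the hypothesis rules out $u=0$ and $v=0$: if $u=0$ then both $g(u,u)$ and $g(u,v)$ vanish and the inequality reduces to $1\le 0$, a contradiction. Combined with the sign conditions on $g(u,u)$ and $g(v,v)$, the nonvanishing of $u$ and $v$ immediately yields that both are causal.

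Next, I would expand
\[
[1+g(u,v)]^2=1+2g(u,v)+g(u,v)^2\le g(u,u)\,g(v,v)
\]
and subtract the reverse Cauchy--Schwarz inequality $g(u,v)^2\ge g(u,u)g(v,v)$, which holds for any pair of causal vectors in a Lorentzian form. This leaves $1+2g(u,v)\le 0$, i.e.\ $g(u,v)\le -1/2<0$. Since for two causal vectors the sign of $g(u,v)$ is determined by whether they lie in the same or in the opposite half of the light cone (one sign for future--future or past--past, the other for mixed pairs), the strictly negative value forces $u$ and $v$ to be of the same time orientation.

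For the strict version the hypothesis strengthens to $g(u,u)g(v,v)>[1+g(u,v)]^2\ge 0$, so neither $g(u,u)$ nor $g(v,v)$ can vanish and both vectors must be timelike rather than null. The only real care needed in this argument is keeping the sign conventions straight, both in the direction of the reverse Cauchy--Schwarz inequality and in the characterization of same time orientation in terms of the sign of $g(u,v)$; once that is done the calculation is essentially two lines and I do not expect any genuine obstacle.
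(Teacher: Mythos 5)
Your proof is correct and follows essentially the same route as the paper: ruling out $u=0$, $v=0$ via the $1\le 0$ contradiction, combining the reverse Cauchy--Schwarz inequality $g(u,v)^2\ge g(u,u)\,g(v,v)$ with (\ref{bud}) to get $g(u,v)\le -1/2<0$ and hence the same time orientation, and using strictness of the inequality to exclude lightlike vectors. The only caveat, which you share with the paper's own proof, is that the argument implicitly reads the sign hypothesis as $g(u,u),g(v,v)\le 0$ (the convention in which nonzero vectors of nonpositive norm are causal and same-oriented causal pairs have negative inner product); with the literal ``$\ge 0$'' the step ``nonzero implies causal'' would not follow.
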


\begin{proof}
Neither $u$ nor $v$ can vanish, for we would get $1\le 0$, thus they are causal. From the reverse Cauchy-Schwarz inequality $ g(u,u)\, g(v,v)\le  g(u,v)^2$ (which is independent of the relative time orientation) and (\ref{bud}) we get $g(u,v)\le -1/2<0$, thus $u$ and $v$ have the same time orientation.
It is clear that if the inequality is strict then neither of them can be lightlike, for we would get $0\le [1+g(u,v)]^2<0$.
\end{proof}

\begin{lemma} \label{llo}
Let $g$ be a Lorentzian bilinear form and let $\alpha$ and $\beta$ be causal covectors. The bilinear form
\[
h=\alpha\otimes \beta+ \beta \otimes\alpha + g ,
\]
is positive definite iff
\begin{equation} \label{mis}
[1+g^{-1}(\alpha,\beta)]^2<g^{-1}(\alpha,\alpha)\, g^{-1}(\beta,\beta).
\end{equation}
Moreover, in this case $\alpha$ and $\beta$ are timelike and of the same time orientation.
\end{lemma}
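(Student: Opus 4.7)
The plan is to dualize: let $A,B$ be the vectors with $\alpha=g(A,\cdot)$ and $\beta=g(B,\cdot)$, which are causal by hypothesis. Writing $a:=g(A,A)$, $b:=g(B,B)$, $c:=g(A,B)=g^{-1}(\alpha,\beta)$, the form becomes
\[
h(v,v)=g(v,v)+2\,g(A,v)\,g(B,v).
\]
Since $\alpha$ and $\beta$ vanish on $\mathrm{span}(A,B)^\perp$, $h$ agrees there with $g$, and the splitting $\mathrm{span}(A,B)\oplus\mathrm{span}(A,B)^\perp$ is also $h$-orthogonal. Once $A$ is known to be timelike, $A^\perp$ (hence $\mathrm{span}(A,B)^\perp$) is Riemannian, so positive definiteness of $h$ reduces to that of $h|_{\mathrm{span}(A,B)}$.

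A preliminary step is to show that both sides of the iff force $\alpha,\beta$ to be timelike. If $A$ were null, then $h(A,A)=a(1+2c)=0$ with $A\ne 0$ would contradict positivity; conversely, $a=0$ would make (\ref{mis}) demand $(1+c)^2<0$, absurd. One may thus assume $a,b<0$. The easy subcase $B=\lambda A$ is handled by inspection: $h|_{A^\perp}=g|_{A^\perp}$ is automatically Riemannian, $h(A,A)=a(1+2c)$ is positive iff $c<-1/2$, and (\ref{mis}) specialized via $b=\lambda^2 a$, $c=\lambda a$, $ab=c^2$ collapses to the same inequality.

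Now assume $A,B$ linearly independent. The plane they span is Lorentzian, as it contains the timelike vector $A$, so reverse Cauchy--Schwarz yields the strict inequality $ab<c^2$. The Gram matrix of $h$ in the basis $\{A,B\}$ is
\[
\begin{pmatrix} a(1+2c) & ab+c+c^2 \\ ab+c+c^2 & b(1+2c) \end{pmatrix},
\]
and the central algebraic identity I would establish is
\[
ab(1+2c)^2-(ab+c+c^2)^2=\bigl((1+c)^2-ab\bigr)(ab-c^2).
\]
Since $ab-c^2<0$, the determinant is positive precisely when $(1+c)^2<ab$, i.e.\ when (\ref{mis}) holds. Furthermore, (\ref{mis}) immediately yields $1+2c<ab-c^2\le 0$, hence $c<-1/2$; combined with $a<0$, this makes the diagonal entry $a(1+2c)>0$ as well, completing the equivalence.

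The ``in particular'' clause is then automatic: (\ref{mis}) excludes lightlike covectors by the preliminary step, so $\alpha$ and $\beta$ are timelike, and $c=g(A,B)<-1/2<0$ forces $A$ and $B$ to share their time orientation. I expect the main obstacle to be spotting the factorization of the $2\times 2$ determinant; once it is in hand, all sign conditions follow mechanically from reverse Cauchy--Schwarz.
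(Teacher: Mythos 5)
Your proof is correct and takes essentially the same route as the paper's: you dualize to $A=\alpha^\sharp$, $B=\beta^\sharp$, split off $\mathrm{span}(A,B)^{\perp}$ (the paper's $k\in\ker\alpha\cap\ker\beta$), treat the proportional and non-proportional cases separately, and use the same determinant factorization, your $ab(1+2c)^2-(ab+c+c^2)^2=\bigl((1+c)^2-ab\bigr)(ab-c^2)$ being identical to the paper's $(y^2-xz)[xz-(1+y)^2]$. The only difference is cosmetic: Gram-matrix language in place of the paper's expansion of $h(v,v)$ in the coefficients of $v=k+a\alpha^\sharp+b\beta^\sharp$.
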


\begin{proof}
The last statement is immediate from the dual of Lemma \ref{nis}, thus we need only to prove the equivalence between the positive definiteness of $h$ and Eq. (\ref{mis}).

To start with, we derive some results that are useful for both directions of the proof. Suppose that $\alpha,\beta\ne 0$ and that they are not proportional.
Let $\alpha^\sharp=g^{-1}(\alpha,\cdot)$, $\beta^\sharp=g^{-1}(\beta,\cdot)$, then the generic vector can be written
\begin{equation} \label{voq}
v=k+a \alpha^\sharp+b \beta^\sharp
\end{equation}
where $a,b\in \mathbb{R}$ and $k\in \textrm{ker} \alpha \cap  \textrm{ker} \beta$.
Indeed, the pair $(a,b)$ is uniquely determined by the equations
\begin{align*}
\alpha(v)&= a g^{-1}(\alpha , \alpha)+b g^{-1}(\alpha, \beta),\\
\beta(v)&=a g^{-1}(\alpha, \beta)+bg^{-1}(\beta,\beta),
\end{align*}
where the determinant is $g^{-1}(\alpha , \alpha)g^{-1}(\beta,\beta)-g^{-1}(\alpha, \beta)^2$ which is negative by the reverse Cauchy-Schwarz inequality.

We have setting $x=g^{-1}(\alpha, \alpha)$, $y=g^{-1}(\alpha,\beta)$, $z=g^{-1}(\beta,\beta)$
\begin{align} \label{vid}
h(v,v)=g(k,k)+a^2(2 x y+x )+b^2 (2 z y+z)+2ab(x z+y^2+y)
\end{align}
where $g(k,k)\ge 0$, because $k$, being orthogonal to the vectors $\alpha^\sharp\pm \beta^\sharp$, is spacelike (notice that $\alpha^\sharp$ and $\beta^\sharp$ are causal and not proportional, hence their sum or difference is timelike). The determinant of the quadratic form in $(a,b)$ appearing in (\ref{vid}) is
\begin{align}
x z(2y+1)^2-(x z+y^2+y)^2=(y^2-xz)[xz-(1+y)^2] .
\end{align}

$\Rightarrow$. Assume that $h$ is positive definite.
Clearly $\alpha\ne 0$ and $\beta\ne 0$, otherwise $h$ would be Lorentzian.

Let us suppose that $\alpha$ and $\beta$ are proportional in which case we can find a causal 1-form $\tau$ such that $\alpha=s \tau$, $\beta=\pm s^{-1} \tau$ for some $s>0$ and some sign, and hence $\alpha \otimes \beta =\pm \tau\otimes \tau$. Then, contracting the quadratic form $h=\pm 2\tau \otimes \tau +g$ with $\tau^\sharp$, we get
\begin{equation} \label{kkk}
h(\tau^\sharp,\tau^\sharp)=g^{-1}(\tau,\tau)[\pm 2g^{-1}(\tau,\tau)+1]
\end{equation}
which is not positive if $\tau$ is lightlike, thus $\alpha, \beta$ and $\tau$ are timelike. Moreover, under this condition we get positive definiteness of $h$  only if the  condition
\[
1\pm 2g^{-1}(\tau,\tau)<0
\]
holds. Now, it cannot be satisfied in the negative sign case, thus $\alpha$ and $\beta$ have the same time orientation  and we are left with precisely the condition given by Eq.\ (\ref{mis}).


Let us consider the case in which $\alpha$ and $\beta$ are not proportional.
A necessary condition for positive definiteness of $h$ is obtained from Eq.\ (\ref{vid}) setting $k=0$ and $b=0$ which gives $x\ne 0$, and hence $x<0$ (as $x\le 0$), and $2y+1<0$. Analogously another necessary condition is $z<0$, thus $\alpha$ and $\beta$ are timelike. Under this condition the positivity of $h(v,v)$ for $k=0$ implies  the positivity   of the  quadratic form (\ref{vid}) in $(a,b)$ and hence of its determinant. This condition reads
\begin{align}
0<(y^2-xz)[xz-(1+y)^2] .
\end{align}
The former factor on the right-hand side is positive by the reverse Cauchy-Schwarz inequality (since $y$ appears squared, this inequality holds independently of the relative time orientation of $\alpha$ and $\beta$, and equality holds only in the proportional case), thus we are left with $xz-(1+y)^2>0$ which is Eq.\ (\ref{mis}).

$\Leftarrow$. Assume that Eq.\ (\ref{mis}) holds true. By Lemma \ref{nis} the covectors $\alpha$ and $\beta$ are timelike and of the same time orientation, thus, keeping the previous notation, $x<0$, $z<0$.

If $\alpha$ and $\beta$ are proportional, we have $\alpha=s \tau$, $\beta=s^{-1}\tau$, for some timelike 1-form $\tau$ and real number $s>0$,  then Eq.\ (\ref{mis}) reads $1+2 g^{-1}(\tau,\tau)<0$, which due to the fact that $\tau$ is timelike, that $h(\tau^\sharp,\tau^\sharp)=g^{-1}(\tau,\tau)[ 2g^{-1}(\tau,\tau)+1]>0$, that $h(\tau^\sharp,\cdot)=[2g^{-1}(\tau,\tau)+1] \tau$, and $h \vert_{\textrm{ker} \tau}= g\vert_{\textrm{ker} \tau}$, implies that $h$ is positive definite.

Suppose that $\alpha$ and $\beta$ are not proportional.
The reverse Cauchy-Schwarz inequality and Eq.\ (\ref{mis}) give  $(1+y)^2<xz<y^2$ which implies $1+2y<0$, thus the quadratic form in $(a,b)$ in Eq.\ (\ref{vid}) is positive definite. As the generic vector $v$ can be written as in (\ref{voq}), the bilinear form $h$ in Eq.\ (\ref{vid}) is positive definite, and we have finished.
\end{proof}


We are ready to state our characterization of the closed Lorentzian submanifolds of Minkowski spacetime. We also state a number of properties that these spacetimes satisfy.

\begin{theorem} \label{mia}
Let $(M,g)$ be an $n$-dimensional spacetime $(M,g)$, where $g$ is $C^k$, $k\in \mathbb{N}\backslash\{1,2\} \cup \{\infty\}$, and let $s:=k$ for $k\ge 3$; $s:=1$ for $k=0$. The next properties are equivalent
\begin{itemize}
\item[(a)] there is  a $C^s$ proper isometric embedding in Minkowski spacetime $\mathbb{L}^{N+1}$ for some $N\ge 1$,
\item[(b)]      there are two $C^s$ (equiv.\ $C^1$, equiv.\ $C^\infty$) temporal functions $t_-,t_+\colon M\to \mathbb{R}$, such that
\begin{equation} \label{mgi}
\vert 1+g(\nabla t_-,\nabla t_+)\vert < \Vert \nabla t_-\Vert_g\, \Vert \nabla t_+\Vert_g,
\end{equation}
and for every $a,b\in \mathbb{R}$, $t_-^{-1}([a,+\infty))\cap t_+^{-1}((-\infty, b])$ is compact.
\item[(c)] On $M$ there is a $C^{k+1}$ (equiv.\ smooth) function $t\colon M\to \mathbb{R}$ such that
\begin{equation} \label{msu}
\tilde h:=g+2\dd t^2
\end{equation}
is a complete Riemannian metric on $M$.
\end{itemize}
In this case $N$ can be chosen to be $N(n,k)+1$, where $N(n,k)$ is the optimal value  for the isometric Riemannian problem.

Moreover, if these equivalent  conditions hold true then $(t_++t_-)/\sqrt{2}$ is temporal, steep and Cauchy and we have that
$\sqrt{2} t$ is temporal, strictly steep, $\tilde h$-steep,  Cauchy, and the level sets $S_a=t^{-1}(a)$ of $t$, endowed with the metric $\gamma^a$ induced  from $g$ are complete Riemannian manifolds.

Finally, the $C^s$ functions  $t_-,t_+$ and the smooth function $t$ can be found so as to be related as follows $t=(t_-+t_+)/2$ where $t_-$ and $t_+$ are both Cauchy, such that $t_-<t_+$,
and so that, defining $f=(t_--t_+)/2$, the function $t\times f\colon M\to \mathbb{R}^2$  is proper. Additionally, they can be chosen so that the following   properties hold true:
\begin{itemize}
\item[(i)] Every curve contained in $t^{-1}([a,b])$ for some $a,b \in \mathbb{R}$ and escaping every compact set has infinite length for $g+\dd t^2$.
\item[(ii)] For each $C^0$ function $\mu:M\to \mathbb{R}$  such that there is a constant $\epsilon>0$,  with $\mu>\epsilon$, we have that every $\tilde g$-spacelike curve, with $\tilde g=-\mu \dd t^2+g$,  escaping every compact set is complete (this is a kind of stable uniform spacelike completeness condition).
\item[(iii)] With the notations of the previous point: for $\mu$ bounded from above $t$ is still Cauchy in $(M,\tilde g)$.
\end{itemize}

\end{theorem}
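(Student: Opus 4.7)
The plan is to prove the equivalences via (a)$\Leftrightarrow$(c) and (a)$\Leftrightarrow$(b), and then to extract the moreover statements by a perturbation of the smooth $t$ produced by (c). For (a)$\Rightarrow$(c), using Cartesian coordinates $(X^0,\vec X)$ on $\mathbb{L}^{N+1}$, I set $t=X^0|_M$; then
\[
g+2\,\dd t^2 \;=\; \eta^{(N+1)}|_{TM}+2\,\dd X^0\otimes \dd X^0|_{TM} \;=\; \Big(\sum_{\mu=0}^N \dd X^\mu\otimes \dd X^\mu\Big)\Big|_{TM}
\]
is the Euclidean metric induced on the submanifold $M\subset\mathbb{R}^{N+1}$, which is complete because $\phi(M)$ is closed. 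For (c)$\Rightarrow$(a), apply Corollary \ref{odw} to $(M,\tilde h)$ to obtain a proper isometric embedding $\varphi\colon M\to E^{N(n,k)+1}$, and set $\phi:=(\sqrt 2\,t,\varphi)\colon M\to\mathbb{L}^{N(n,k)+2}$; the pullback is $-2\,\dd t^2+\tilde h=g$ and properness follows from that of $\varphi$.

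For (a)$\Rightarrow$(b), I will choose a smooth proper function $\rho\colon\mathbb{L}^{N+1}\to[0,\infty)$ depending only on $\vec X$ and with $|\nabla^E\rho|^2<1/2$ (for instance a smoothing of $|\vec X|/\sqrt 3$), and set $t_\pm:=(X^0\pm\rho)|_M$. The ambient identity
\[
\eta^{(N+1)}+\dd t_-\otimes \dd t_++\dd t_+\otimes \dd t_-=\delta^{(N+1)}-2\,\dd\rho\otimes \dd\rho
\]
is positive definite on $\mathbb{R}^{N+1}$ (the rank-one Euclidean correction has norm less than $1$), and positive definiteness persists on restriction to $TM$, so by Lemma \ref{llo} inequality (\ref{mgi}) holds on $M$. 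Temporality of $t_\pm$ follows from $\eta^{-1}(\dd t_\pm,\dd t_\pm)=-1+|\nabla^E\rho|^2<0$ together with the fact that future causal vectors on $M$ are future causal in the ambient; and $t_-^{-1}([a,\infty))\cap t_+^{-1}((-\infty,b])$ forces $\rho\le(b-a)/2$ and $X^0\in[a+\rho,b-\rho]$ on $\phi(M)$, i.e.\ a compact region of $\mathbb{L}^{N+1}$, whence the preimage in the closed $M$ is compact. For (b)$\Rightarrow$(a), I view the target $\mathbb{L}^{N(n,k)+2}$ in null coordinates $(u_-,u_+,\vec Y)$ with metric $-\dd u_-\otimes\dd u_+-\dd u_+\otimes\dd u_-+\sum\dd Y^i\otimes\dd Y^i$. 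Lemma \ref{llo} makes $\tilde g:=g+\dd t_-\otimes\dd t_++\dd t_+\otimes\dd t_-$ Riemannian, Nash gives a (not necessarily proper) isometric embedding $\varphi\colon(M,\tilde g)\to E^{N(n,k)}$, and $\phi:=(t_-,t_+,\varphi)\colon M\to\mathbb{L}^{N(n,k)+2}$ pulls the target metric back to $g$. Properness of $\phi$ is automatic: any compact $K$ in the target has $u_\pm$-projections in some $[A_\pm,B_\pm]$, and $\phi^{-1}(K)$ is closed in $\{t_-\in[A_-,B_-],\,t_+\in[A_+,B_+]\}$, itself closed in the compact set $\{t_-\ge A_-,\,t_+\le B_+\}$ provided by (b).

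Once the equivalences are in hand, the moreover claims for $\sqrt 2\,t$ follow directly from $\tilde h=g+2\,\dd t^2>0$: at any future causal $v\ne 0$, $2(\dd t(v))^2>-g(v,v)$ gives strict steepness of $\sqrt 2\,t$, and $(\sqrt 2\,\dd t(v))^2\ge g(v,v)+2(\dd t(v))^2=\tilde h(v,v)$ gives $\tilde h$-steepness. Parametrizing any inextendible future causal curve by $\tilde h$-arc length produces a parameter running over $[0,\infty)$ (by completeness of $\tilde h$), along which $\dd t(\dot\gamma)\ge 1/\sqrt 2$ forces $t\to+\infty$, and analogously for the past, so $\sqrt 2\,t$ is Cauchy. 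The slice $S_a=t^{-1}(a)$ carries $\tilde h|_{S_a}=g|_{S_a}=\gamma^a$ (since $\dd t$ vanishes tangentially) and is complete as a closed subset of the complete $(M,\tilde h)$.

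For the coupled construction witnessing (i)–(iii), I will set $t_\pm:=t\pm\epsilon\rho$, where $t$ is the smooth function from (c), $\rho\colon M\to[0,\infty)$ is a smooth $\tilde h$-1-Lipschitz proper function (extracted from $d_{\tilde h}(\cdot,p_0)$ via Azagra \cite{azagra07}), and $\epsilon<1/\sqrt 2$ is small enough that temporality of $t_\pm$ and (\ref{mgi}) persist as open perturbations of the strict condition $g^{-1}(\dd t,\dd t)<-1/2$ (the Sherman–Morrison identity $\tilde h^{-1}(\dd\rho,\dd\rho)\ge g^{-1}(\dd\rho,\dd\rho)$ makes the $\tilde h$-Lipschitz control transfer to a one-sided $g$-control on the perturbation). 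Cauchyness of $t_\pm$ is then a direct comparison along inextendible causal curves parametrized by $\tilde h$-arc length: $|\dd\rho(\dot\gamma)|\le 1$ is dominated by $|\dd t(\dot\gamma)|\ge 1/\sqrt 2$, and $t\times f=t\times(-\epsilon\rho)$ is proper because $\rho$ is. The main obstacle will be to arrange (i)–(iii) simultaneously: condition (i) amounts to completeness of slabs $t^{-1}([a,b])$ for $g+\dd t^2$, while (ii) and (iii) amount to uniform stability of spacelike completeness and of Cauchyness of $t$ under cone widenings $\tilde g=-\mu\,\dd t^2+g$ with $\mu$ suitably bounded. Each will be reduced to completeness of a suitable Riemannian modification of $\tilde h$ via the $\tilde h$-Lipschitz control on $\rho$, by choosing $\epsilon$ and the Lipschitz constant of $\rho$ small enough that all modifications remain bi-Lipschitz equivalent to $\tilde h$ on the relevant regions.
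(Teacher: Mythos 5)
Your cycle (a)$\Leftrightarrow$(b)$\Leftrightarrow$(c) is essentially the paper's own argument in light disguise: the ambient proper Lipschitz function ($F=\frac12\sqrt{1+\vec X\cdot\vec X}$ in the paper, your $\rho$), Lemma \ref{llo} to convert positive definiteness of $g+\dd t_-\otimes \dd t_++\dd t_+\otimes \dd t_-$ into (\ref{mgi}), Nash plus one or two extra coordinates for (b)$\Rightarrow$(a), and Corollary \ref{odw} for (c)$\Rightarrow$(a); your null-coordinate packaging $(t_-,t_+,\varphi)$ is just the paper's $(\sqrt2 t,\varphi,\sqrt2 f)$ after a linear isometry, and your direct properness arguments are fine. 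Two details you skip are needed even for the equivalences as stated: before invoking Nash in (b)$\Rightarrow$(a) you must smooth $t_\pm$ (a $C^1$ or $C^s$ pair only gives a $C^{s-1}$ metric, outside Nash's hypotheses; the paper does this by a Whitney-topology density argument, which also yields the ``equiv.\ $C^1$, equiv.\ $C^\infty$'' claims and the smooth $t$ in (c), which your $t=X^0|_M$, being only $C^s$, does not provide); and in your perturbed construction $t_\pm=t\pm\epsilon\rho$ you must verify that $\dd t_\pm$ are causal before applying Lemma \ref{llo}, since causality of the two covectors is a hypothesis of that lemma (it does hold for small $\epsilon$ via $\tilde h$-steepness of $\sqrt2 t$ and the $\tilde h$-Lipschitz bound on $\rho$, but it is not automatic from positive definiteness alone as you use it).

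The genuine gap is property (i). Your plan is to obtain (i)--(iii) intrinsically from the smooth $t$ of (c) by ``bi-Lipschitz comparison with $\tilde h$''. For (ii) and (iii) this works (the paper's own computations $g\ge\frac{\epsilon}{2+\epsilon}\tilde h$ on $\tilde g$-spacelike vectors, and $\tilde h=\check g+(q+2)\dd t^2$ for $\check g=-q\,\dd t^2+g$, use nothing but completeness of $\tilde h$). But (i) cannot be reached this way. First, $g+\dd t^2$ is positive semidefinite iff $t$ itself (not merely $\sqrt2 t$) is steep, and a generic witness of (c) is not: in $\mathbb{L}^2$ take $t=T+s(x)$ with $0<\vert s'\vert<1/\sqrt2$; then $\tilde h$ is complete while $g+\dd t^2$ has negative determinant wherever $s'\ne0$, so the ``length for $g+\dd t^2$'' in (i) is not even defined for that $t$. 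Second, even when $g+\dd t^2\ge0$, infinite $\tilde h$-length of a curve escaping to infinity inside a slab $t^{-1}([a,b])$ gives no lower bound on its $(g+\dd t^2)$-length, because the discrepancy $\dd t(\dot\gamma)^2$ is not controlled on the slab (the curve may oscillate in $t$ with unbounded total variation), so no bi-Lipschitz comparison on the slab exists. The paper proves (i) only for the embedding-produced $t=X^0\circ\phi$, for which $g+\dd t^2=\phi^*\big(\sum_{i\ge1}(\dd X^i)^2\big)$, by interpolating the curve with straight segments in $\mathbb{R}^{N+1}$ and projecting to the slice $X^0=0$, where the induced metric is Euclidean and the projection is non-expanding; the escaping curve, having bounded $X^0$ and leaving every compact set of the closed submanifold, must run off to spatial infinity, whence infinite length. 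To complete your proof you should either retain the $t$, $f$, $t_\pm$ coming from the embedding (as the paper does) and run this ambient projection argument, or supply a genuinely new argument for (i); the reduction you sketch does not suffice.
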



\begin{proof}

Consider the smooth function  $F\colon \mathbb{R}^N\to [1/2,+\infty)$, $F=\frac{1}{2} \sqrt{1+\vec{X}\cdot \vec{X}}$.  The function $F$ is proper and   $\Vert \nabla^E F\Vert_E< 1/2$, thus it is 1/2-Lipschitz.



Regarding the coordinates of $\mathbb{R}^N$ as the last $N$-coordinates of $\mathbb{L}^{N+1}$, hence extending $F$ in the obvious way to  $\mathbb{L}^{N+1}$, we have that
\begin{align*}
H&:=\eta^{(N+1)}+2 (\dd X^0)^2 -2(\dd F)^2
=\delta^{(N+1)} - 2 (\dd F)^2
\end{align*}
is positive definite. Indeed, at those points where $\dd F=0$ this is clear, while if $\dd F\ne 0$, we have that in $\textrm{ker}\, \dd F$ it is positive definite. Evaluated once in $\nabla^E F$ it gives $(1-2\Vert \nabla^E F\Vert_E^2)\dd F$ which vanishes on $\textrm{ker} \, \dd F$, while evaluated twice it gives $ \Vert \nabla^E F\Vert_E^2(1-2\Vert \nabla^E F\Vert_E^2)\ge \frac{1}{2} \Vert \nabla^E F\Vert_E^2>0$ which proves that  $H$ is positive definite (observe that the inequality $\Vert \nabla^E F\Vert_E\le 1/2$ is non-ambiguous since independent of whether we are working in $E^N$ or $E^{N+1}$ as $F$ does not depend on $X^0$).

Let $V\in T\mathbb{L}^{N+1}$, since $\Vert \nabla ^E F\Vert_E\le 1/2$, we have $\Vert V\Vert_E^2=\Vert V\Vert_H^2+2(\p_{V} F)^2\le  \Vert V \Vert_H^2+\frac{1}{2} \Vert V\Vert_{E}^2$, thus $ \Vert V \Vert_{E} \le {\sqrt{2}}  \Vert V \Vert_{H}$, which implies  $d^E\le \sqrt{2} d^H$, and since the closed balls of $H$ with radius $r$ are contained in the compact balls of $\delta^{(N+1)}$ with radius $\sqrt{2} r$, by Hopf-Rinow the  metric $H$ is complete.

Let us assume $(a)$ and explore its consequences.
Suppose that there is a $C^s$ proper isometric embedding $\phi: M \to  \mathbb{L}^{N+1}$, $g=\phi^* \eta^{(N+1)}$.
Let $h:=\phi^*H$, since $d^H\circ (\phi \times \phi)\le d^h$ we conclude that $h$ is  a $C^{s-1}$ complete Riemannian metric.

Let us introduce the $C^s$ functions $t:=X^0\circ \phi$ and $f:=F\circ \phi$.
We have
\begin{align} \label{wlw}
h=&g+ 2\dd t^2-2\dd f^2.
\end{align}

Proof of (i). Let us
consider the degenerate metric
\begin{align*}
\check H&:=\eta^{(N+1)}+ (\dd X^0)^2=\sum_{i=1}^N (\dd X^i)^2.
\end{align*}
Every $C^1$ inextendible curve $\sigma\colon I\to M$ contained in $t^{-1}([a,b])$ for some $a,b\in \mathbb{R}$, $a\le b$, and escaping every compact set must have infinite length according to $\phi^*\check H=g+\dd t^2$, in fact this curve can be interpolated by  straight lines in $\mathbb{R}^{N+1}$ (hence not necessarily contained in $M$), so that the length for $\check H$ is not increased,
then the projection on the slice $X^0=0$ has infinite length because on that slice the induced metric from $\check H$ coincides with the Euclidean metric. Finally, the projection is non-expanding.

Proof of (ii).
Let   $\mu:M\to \mathbb{R}$ be a continuous function such that there is a constant $\epsilon>0$,  $\mu>\epsilon$, and let
 $\tilde g:=-\mu \dd t^2+g$, hence a Lorentzian metric on $M$ with cones wider than $g$.

Consider a $\tilde g$-spacelike curve $s \mapsto \gamma(s)$, so that evaluated twice on the tangent vector (we omit $ \gamma'$ for shortness) $-\mu \dd t^2+g\ge 0$, then over it for $k:=\frac{\mu}{2+\mu} \in (0,1)$, $\mu=\frac{2k}{1-k}$
\[
g=k g+(1-k) g\ge k g+(1-k)\mu \dd t^2=k(g+2 \dd t^2)\ge k h \ge \frac{\epsilon}{2+\epsilon} h ,
\]
 where $h$ is complete. This implies that if $\gamma$ escapes every compact set
 \[
 \int \sqrt{g(\gamma',\gamma')}\, \dd s \ge   \sqrt{ \frac{\epsilon}{2+\epsilon} } \int \sqrt{h(\gamma',\gamma')}\, \dd s=+\infty.
\]
Proof of $ (a)\Rightarrow (b)$. We have
\begin{align} \label{msd}
h=&g+ 2\dd t^2-2\dd f^2=g+\alpha\otimes \beta+\beta\otimes \alpha,
& \alpha =\dd (t-f), \quad \beta=\dd (t+f).
\end{align}
The functions $X_\pm=X^0\pm F$ are temporal in $\mathbb{L}^{N+1}$ because the gradient is $-\p_0\pm \nabla^E F$ and $F$ has Lipschitz constant smaller than 1. Thus the $C^s$ compositions
\[
t_\pm = t \pm f= (X^0\pm F)\circ \phi=X_\pm\circ \phi
\]
 are  temporal on $M$ (indeed $\dd X_+$ is positive over the future Minkowski causal cone and thus over its intersection with $\phi_*(TM)$, a fact that expresses the temporality of $t_+$), hence $t=(t_-+t_+)/2$ is temporal and $\dd t_\pm$ are timelike 1-forms. Equation (\ref{mgi}) now follows from Lemma \ref{llo}.

For  $a,b\in \mathbb{R}$, let us consider the closed set
\[
B=t_-^{-1}([a,+\infty))\cap t_+^{-1}((-\infty, b])
\]
Over it $2f\vert_B=(t_+-t_-)\vert_B \le b-a$. As $f$ is positive, it is bounded on $B$. Thus $t=t_-+f$ is lower bounded on $B$ and, as it can also be written  as $t=t_+-f$, it is also upper bounded on $B$. Since ${\vec{X}\cdot \vec{X}} =4F^2-1$, $B$ is contained in the $\phi$-inverse image of a compact cylinder of $\mathbb{L}^{N+1}$ which implies, by properness of $\phi$, that $B$ is compact in the topology of $M$.

Proof that $t$ is steep (the proof that the embedding can be chosen so that $t$ is smooth will be given later on). Notice that for every future causal vector $V$ on $\mathbb{L}^{N+1}$, $\dd X^0(V) \ge \sqrt{-\eta^{N+1}(V,V)}$, thus restricting to vectors in  $\phi_*(M)$ we get $\dd t(v)\ge   \Vert v\Vert_g$ for every future $g$-causal vector $v$, which implies the steepness condition $ \Vert \nabla^g t\Vert_g\ge 1$, as can be seen taking $v=-\nabla^g t$, see also \cite[Thm.\ 1.23]{minguzzi18b}.

Proof that $t$ is Cauchy. Since $\phi(M)$ is a closed subset of $\mathbb{L}^{n+1}$, a future (past) inextendible causal curve on $M$ is also future (resp.\ past) inextendible in $\mathbb{L}^{n+1}$. As a consequence, for each $a\in \mathbb{R}$, $\phi^{-1}((X^0)^{-1}(a)\cap \phi(M))=t^{-1}(a)$ is a Cauchy hypersurface, and so $t$ is Cauchy.

Proof that $t_-$ and $t_+$ are Cauchy. Let $V\in T\mathbb{L}^{N+1}$, $V=\beta \p_0+\alpha_i \p_i$, be future-directed causal, i.e.\ $\beta>0$, $\Vert \vec \alpha \Vert_E\le \beta$, then
\[
\dd X_+(V)=\beta+  \langle \nabla^E F, \vec \alpha \rangle\ge \beta-\vert \langle \nabla^E F, \alpha \rangle_E\vert\ge \beta - \frac{1}{2} \Vert \vec\alpha \Vert_E\ge \beta/2.
\]
 Every inextendible causal curve $\sigma\colon I\to M$ is inextendible in $\mathbb{L}^{N+1}$ and can be parametrized with $t=X^0\circ \sigma$, so that for $V=\phi_*(v)$, $v=\dot \sigma$, we have $\beta=1$ and the domain becomes $\mathbb{R}$ (because $t$ is Cauchy). The inequality $\dd X_+(\phi_*(\dot \sigma)) \ge 1/2$ proves that $t_+$ is Cauchy. Similarly, $t_-$ is Cauchy.

Moreover, by construction $f$ is proper on the subsets of the form $t^{-1}([a,b])$, as $F$ is proper on subsets of the form $(X^0)^{-1}([a,b])$. Thus $t\times f\colon M\to \mathbb{R}^2$, $p\mapsto (t(p),f(p))$, is proper.

Proof of $ (a)\Rightarrow (c)$ and properties of $t$. Equation (\ref{msu}) follows from Eq.\ (\ref{wlw}) setting $\tilde h=h+2\dd f^2$. Clearly, $\tilde h$ is complete as $h$ is.
Equation (\ref{msu}) evaluated  on future causal vectors  implies  that $\sqrt{2}\, t$ is strictly steep and $\tilde h$-steep, hence Cauchy. If $v$ is tangent to the spacelike hypersurface $S_a:=t^{-1}(a)$ then $\Vert v\Vert_{\tilde h} \le \sqrt{g(v,v)}=\sqrt{\gamma^a(v,v)}$ which implies the completeness of $(S_a,\gamma^a)$.

Proof of (iii). Let $\mu$ be a non-negative function bounded from above by a constant $q>0$.
The flat metric on $\mathbb{R}^{N+1}$, $\check \eta^{(N+1)}=-(1+q) (\dd X^0)^2+\sum_i (\dd X^i)^2$, induces the metric $\check g=- q \dd t^2 +g$ on $M$. 
  It is clear that $X^0$ is Cauchy for $\check \eta^{(N+1)}$ thus  $t$ is  Cauchy for $(M,\check g)$ (because every inextendible $\check g$-causal curve is $\check \eta$-causal inextendible ) and  hence $t$ is Cauchy for $(M,\tilde g)$.

$(b) \Rightarrow (a)$. The temporal conditions and Eq.\ (\ref{mgi}) are open conditions, thus if $t_-, t_+$ are $C^1$ and satisfy them, by the density argument based on  \cite[Thm.\ 2.6]{hirsch76} \cite[p. 35]{hirsch76} and already used in the proof of Thm.\ \ref{arg}, they can be found $C^\infty$ while still satisfying the properness condition in (b).

Suppose that there are two smooth temporal functions $t_-,t_+\colon M\to \mathbb{R}$ such that Eq.\ (\ref{mgi}) holds true. By Lemma \ref{llo} the $C^k$ metric on $M$
\[
h=\dd t_-\otimes \dd t_++ \dd t_+ \otimes\dd t_- + g= g+ 2\dd t^2-2\dd f^2 ,
\]
is positive definite, where we set $t=(t_-+t_+)/2$, $f=(t_+ - t_-)/2$. By Nash's theorem it can be $C^s$ isometrically embedded in $E^{N-1}$ for some $N\ge 2$. Let $\varphi\colon M\to E^{N-1}$ be the $C^s$ Nash embedding, and consider the $C^s$  embedding $\phi\colon M\to \mathbb{L}^{N+1}$ given by $\phi=(\sqrt{2} t, \varphi, \sqrt{2} f)$. Then
\[
\phi^*\eta^{(N+1)}=-(\dd (\sqrt{2} t))^2+h+(\dd (\sqrt{2} f))^2=g.
\]
Suppose that the embedding in $\mathbb{L}^{N+1}$ has a boundary point $q$ and let $\tilde O$ be a  relatively compact $\mathbb{L}^{N+1}$-open neighborhood of $q$, then on $M$ there is  a non-relatively compact open set $O=\phi^{-1}(\tilde O)$ over which $t$ and $f$ are bounded, which implies that $t_-$ and $t_+$ are bounded. In particular, $t_-\vert_O\ge a$, $t_+\vert_O\le b$, for some $a,b\in \mathbb{R}$, which due to $O\subset t_-^{-1}([a,+\infty))\cap t_+^{-1}((-\infty, b])$ gives a contradiction as the set on the right-hand side is compact.

This proves that the embedding is proper hence the desired implication, but it also proves that as $X^0\circ \phi=\sqrt{2} t$, the smooth function $\sqrt{2}t:=(t_++t_-)/\sqrt{2}$ is temporal, steep and Cauchy by the same argument used in the implication $(a) \Rightarrow (c)$. In fact,  one can redefine $f$, $t_-$ and $t_+$ as done there, which shows that $t$ can be chosen smooth while $t_-$, $t_+$ remain $C^s$.

$(c) \Rightarrow (a)$. The metric $\tilde h$ has regularity $C^{k}$.  Let $\Phi\colon M\to E^N$ be a $C^s$ proper isometric embedding of $(M,\tilde h)$ into $E^N$ for some $N$. It exists by Cor.\ \ref{odw}.
The $C^s$ embedding $\tilde\Phi\colon M\to \mathbb{L}^{N+1}$ given by $\tilde \Phi=(\sqrt{2} t ,\Phi)$, is isometric because
\[
\tilde \Phi^*\eta^{(N+1)}=-(\dd (\sqrt{2} t))^2+\tilde h= g.
\]
Equation (\ref{msu}) implies that $t$ is $\tilde h$-steep and the completeness of $\tilde h$ implies that $t$ is Cauchy. Suppose that the embedding $\tilde \Phi$ has a boundary point $Q$, then we can find $q_n\in M$, $\phi(q_n)\to Q$, thus  they escape every compact set on $M$. Let $\sigma_n$ be an inextendible causal curve passing through $q_n$, and let $r_n\in M$ be the intersection of $\sigma_n$ with $t^{-1}(t(q))= \phi^{-1}((X^0)^{-1}(X^0(Q)))$. Since $\sigma_n\circ \phi$ are also inextendible causal curves in $\mathbb{L}^{n+1}$ (remember that they have to intersects all the level sets of $X^0$, since $t$ is Cauchy) and since they pass through $\phi(q_n)$ that approach $Q$, the points $\phi(r_n)$ are such that $\phi(r_n)\to Q$. However, this fact implies that $\Phi$ is not closed, a contradiction.
\end{proof}

\begin{remark}
In \cite[Thm.\ 1]{muller13} (second statement) M\"uller recognized that if $(M,g)$ is closely isometrically embeddable then there is a smooth steep Cauchy temporal function $t$ with complete level sets, as we also stated in Thm.\ \ref{mia}. Actually, \cite[Thm.\ 1]{muller13} (first statement) claims the converse but the proof is incorrect.
The error,
 pointed
out by  Miguel S\'anchez and myself, was acknowledged in a  private communication (April 2017) by M\"uller.
He could amend the result by adding further assumptions such as mildness, cf.\ \cite{muller13} for the definition, but the erratum has yet to be published.

In general, it is false that if $t$ is a  smooth steep Cauchy temporal function  with complete level sets in the metric induced from $g$, then $\tilde h:=g+2 \dd t^2$ will be a complete Riemannian metric (so implying the closed embedding by point (c) of Thm.\ \ref{mia}). It is certainly Riemannian but not necessarily complete. Consider
$1+1$ Minkowski spacetime, $\eta=-\dd t^2+\dd x^2$, and let $\sigma\colon (0,\pi/2) \to M$, be the spacelike curve  $s \mapsto (s, \tan s)$. Let us consider a tubular neighborhood of $\gamma$ with compact constant-time sections, and let us make a non-trivial conformal rescaling $\eta \to g=\Omega \eta$, $\Omega\le 1$ just on it, with $\Omega$ approaching one at the boundary of the neighborhood.
Since  $\dd t(\dot \sigma)=1>0$
\[
\sqrt{\tilde h(\sigma',\sigma')}\le \sqrt{g(\dot\sigma,\dot \sigma)}+\sqrt{2} \, \dd t(\dot \sigma)
\]
and since  $t$ is bounded on $\sigma$, for a suitable choice of $\Omega$ we can bound the $\tilde h$-length of $\sigma$, which implies that $\tilde h$ is not complete as $\sigma$ escapes to infinity. However, the level sets of $t$ are  complete in the metric induced from $g$ since over each slice $t^{-1}(a)$, $a\in \mathbb{R}$, the metric has been modified just in a compact set. Finally, $t$ is still steep for $g$ because over a future $g$-causal (hence future $\eta$-causal) vector $v$, $\dd t(v)\ge \Vert v\Vert_\eta \ge \Vert v\Vert_g$ as $\Omega\le 1$.
\end{remark}

\begin{remark}[Geometrical meaning of the steep factor] Let us regard the embedded manifold $M$ as a timelike submanifold of $\mathbb{L}^{N+1}$, and let $t$ be the steep functions given by the restriction of the first coordinate.
The vector $TM\ni w=\nabla t/g(\nabla t,\nabla t)$ satisfies $\dd t (w)=1$ and $g(w,w)=1/g(\nabla t,\nabla t)$.
The vector $w$  can be decomposed as follows $w=\p_0+\alpha u$, $\alpha \ge 0$, where $u\in T\mathbb{L}^{N+1}$ satisfies $\dd X^0(u)=0$ and is normalized according to the metric $\sum_i (\dd X^i)^2$.
Thus $g(w,w)=\eta^{N+1}(w,w)=-1+\alpha^2$ from which we get $\alpha=\sqrt{1-\frac{1}{-g(\nabla t,\nabla t)}}$.
Every spacelike vector in $TM$ orthogonal to $\nabla t$, belongs to $\textrm{ker} \dd X^0$, so it is also orthogonal (in both the Euclidean and Lorentzian interpretations) to $\p_0$ and hence $u$.
Notice that $\alpha$ is the tangent to the Euclidean angle formed by $w$ and $\p_0$ so it represents how much $TM$ is tilted with respect to $\p_0$. Thus the closer the length of $\nabla t$ is to 1, the smaller the Euclidean angle between $TM$ and $\p_0$.
\end{remark}

\begin{corollary}
Let $(M,g)$ be a globally hyperbolic spacetime admitting a compact Cauchy hypersurface, and let $g\in C^k$, $k\in \mathbb{N}\backslash\{1,2\} \cup \{\infty\}$. Then, there is a $C^s$ proper isometric embedding in Minkowski spacetime $\mathbb{L}^{N(n,k)+2}$, where  $s:=k$ for $k\ge 3$; $s:=1$ for $k=0$.
\end{corollary}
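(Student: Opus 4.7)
The plan is to verify condition (c) of Theorem \ref{mia}, namely to construct a smooth function $t\colon M\to \mathbb{R}$ such that $\tilde h := g + 2\,\dd t^2$ is a complete Riemannian metric; the dimension count in Theorem \ref{mia} then immediately yields a $C^s$ proper isometric embedding in $\mathbb{L}^{N(n,k)+2}$.

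Since $(M,g)$ is globally hyperbolic, the M\"uller-S\'anchez result \cite{muller11} combined with Bernal-S\'anchez \cite{bernal04b,bernal04} supplies a smooth steep Cauchy temporal function $t\colon M\to \mathbb{R}$, together with an orthogonal product decomposition $M\cong \mathbb{R}\times S$ under which
\[
g = -\beta\,\dd t^2 + h_t,
\]
where $\beta > 0$ is smooth, $h_t$ is a smooth one-parameter family of Riemannian metrics on $S$, and each level set $t^{-1}(a)$ corresponds to $\{a\}\times S$. Since $\Vert \nabla^g t\Vert_g^2 = 1/\beta$, the steepness condition $\Vert \nabla^g t\Vert_g \ge 1$ translates into the pointwise bound $\beta\le 1$.

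Consequently $\tilde h = (2-\beta)\,\dd t^2 + h_t$ is a $C^k$ Riemannian metric on $M$, as $2-\beta\ge 1$. For completeness, I would verify the Heine-Borel property: let $B\subset M$ be closed and $d^{\tilde h}$-bounded by some $R$ with respect to a basepoint $p_0\in B$. For any $p\in B$ and any connecting curve $\gamma$,
\[
\ell_{\tilde h}(\gamma) \ge \int \sqrt{2-\beta}\,\vert \dd t(\gamma')\vert\,\dd s \ge \vert t(p)-t(p_0)\vert ,
\]
so $t(B)\subset [t(p_0)-R, t(p_0)+R]$. Hence $B$ is contained in the cylinder $t^{-1}([t(p_0)-R, t(p_0)+R])\cong [t(p_0)-R, t(p_0)+R]\times S$, which is compact because $S$ is compact, and $B$ itself is compact as a closed subset of a compact set.

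The essential structural input is the compactness of the Cauchy hypersurface, which rules out spacelike escapes at bounded $t$, while the steepness bound $\beta\le 1$ forces the $\tilde h$-length of any curve to dominate its $t$-variation, ruling out timelike escapes. I expect no serious obstacle: once the signature and completeness of $\tilde h$ are confirmed, Theorem \ref{mia}(c) directly supplies the embedding together with the stated dimension $N(n,k)+2$.
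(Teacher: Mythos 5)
Your proposal verifies condition (c) of Thm.\ \ref{mia}, whereas the paper verifies condition (b): it takes $t_-:=t_+:=t$ for a smooth steep Cauchy temporal function $t$, observes that steepness $\Vert\nabla t\Vert_g\ge 1$ makes the inequality (\ref{mgi}) read $\vert 1-\Vert\nabla t\Vert_g^2\vert<\Vert\nabla t\Vert_g^2$, which is automatic, and gets compactness of $t^{-1}([a,b])\cong[a,b]\times S$ from the topological product structure with $S$ compact. Both routes rest on the same two inputs (a smooth steep Cauchy temporal function, and compactness of the slabs $t^{-1}([a,b])$), and both yield $N=N(n,k)+1$, hence the target $\mathbb{L}^{N(n,k)+2}$, so your dimension count is right. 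What your route buys is an explicit complete Riemannian metric $\tilde h=g+2\,\dd t^2$; what it costs is the extra step of the orthogonal splitting $g=-\beta\,\dd t^2+h_t$ adapted to $t$, which the paper's argument never needs.

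That splitting step is also where there is a genuine (though repairable) gap: for $k=0$ it is not justified as written. M\"uller--S\'anchez \cite{muller11} and Bernal--S\'anchez \cite{bernal04} work with sufficiently regular metrics; for merely continuous $g$ the smooth steep Cauchy temporal function comes from \cite[Thm.\ 3.12]{minguzzi17}, but the orthogonal product decomposition along $\nabla^g t$ requires integrating a vector field that is only $C^0$ (since $g^{-1}$ is $C^0$), so uniqueness of the flow, and with it the diffeomorphism putting $g$ in the form $-\beta\,\dd t^2+h_t$, is not available from the cited results. The fix is to drop the splitting altogether: steepness alone, contracting with $\nabla t$ and restricting to $\ker\dd t$ exactly as in the discussion preceding Lemma \ref{nis}, shows both that $\tilde h=g+2\,\dd t^2$ is positive definite and that $g+\dd t^2\ge 0$, i.e.\ $\tilde h\ge \dd t\otimes\dd t$, so $t$ is $1$-Lipschitz for $d^{\tilde h}$; and the compactness of $t^{-1}([a,b])$ follows from the purely topological fact used in the paper (the homeomorphism $M\cong\mathbb{R}\times S$ with projection $t$ and $S$ compact), or alternatively from compactness of $J^+(S_a)\cap J^-(S_b)$ in a globally hyperbolic spacetime. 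With these replacements your argument is correct for all admitted $k$, including $k=0$; for $k\ge 3$ it is correct as you wrote it.
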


\begin{proof}
Every globally hyperbolic spacetime with $g\in C^0$ admits a smooth steep temporal function $t$ \cite[Thm.\ 3.12]{minguzzi17}, thus $\Vert \nabla t \Vert_g\ge 1$. The spacetime is homeomorphic to the product $\mathbb{R}\times S$ where $S$ is homeomorphic to the Cauchy hypersurface (any two Cauchy hypersurfaces are homeomorphic), and where the projection on the first factor is $t$. Let $t_-:=t=:t_+$, then the inequality (\ref{mgi}) is satisfied and  since $t^{-1}([a,b])$ is homeomorphic to $[a,b]\times S$, it is compact, which concludes the proof.
\end{proof}

\begin{definition}
A {\em proper} spacetime is a spacetime that satisfies the equivalent properties of Thm.\ \ref{mia}.
\end{definition}

%
%

The following general result can be applied to the existence of the isometric embedding case, and gives information on the form of the metric in the open set, possibly the whole spacetime $M$, where the gradients of $t_-$ and $t_+$ are not proportional.

\begin{proposition} \label{pro}
Let $t_-$ and $t_+$ be $C^1$ functions with causal non-proportional gradients on an open region of a spacetime $(M,g)$,  $g\in C^0$. Then there is a $C^0$ Riemannian metric $m_{c_-,c_+}$ on the condimension 2  manifolds $\Sigma_{c_-,c_+}$ intersection of the spacelike hypersurfaces $t_-=c_-$, $t_+=c_+$ for  constants $c_-, c_+$, in such a way that
\begin{equation} \label{vow}
g=-[\alpha \dd t_-^2+2\beta \dd t_- \dd t_+ +\gamma \dd t_+^2]\oplus m_{t_-,t_+}
\end{equation}
for some $C^0$ functions $\alpha, \beta, \gamma:M\to \mathbb{R}$, where  $\alpha,\gamma \le 0$.
Conversely, if  (\ref{vow}) is a $C^0$ Lorentzian metric for some $C^1$ functions $t_-,t_+$, $C^0$ functions $\alpha,\beta,\gamma$, and a $C^0$ Riemannian metric $m_{t_-,t_+}$, and if the inequalities 
$\alpha,\gamma\le 0$  hold, then  $t_-$ and $t_+$ have causal non-proportional gradients.

Moreover, in this case ${\beta^2-\alpha \gamma}> 0$, $\beta\ne 0$, and we have the identities
\begin{align*}
\alpha&= -\frac{\Vert \nabla t_-\Vert^2_g}{\Delta^2}, \quad
&\beta&= \frac{-g(\nabla t_-, \nabla t_+)}{\Delta^2}, \quad
&\gamma&= -\frac{\Vert \nabla t_+\Vert^2_g}{\Delta^2},
\end{align*}
where $\Delta^2:=g(\nabla t_-,\nabla t_+)^2-\Vert \nabla t_+\Vert^2_g \Vert \nabla t_-\Vert^2_g=\frac{1}{\beta^2-\alpha \gamma}>0$. The gradients have the same time orientation iff $\beta>0$.

In particular,  the  inequality
\begin{equation} \label{nine}
\vert 1+g(\nabla t_+,\nabla t_-)\vert <  \Vert \nabla t_+\Vert_g\, \Vert \nabla t_-\Vert_g,
\end{equation}
is equivalent to the  inequality (actually also in the non-strict case)
\begin{equation} \label{ten}
(\beta-1)^2 < \alpha \gamma  ,
\end{equation}
and to the fact that  $g +2 \dd t_- \dd t_+$ is positive definite, in which case $\nabla t_-$ and $\nabla t_+$ are timelike with the same time orientation.
\end{proposition}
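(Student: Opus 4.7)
The plan is to reduce everything to pointwise linear algebra on the $2$-plane $V := \mathrm{span}(\nabla t_-,\nabla t_+) \subset T_pM$ and then invoke Lemma~\ref{llo}. For the forward direction, assume $\nabla t_-,\nabla t_+$ are causal and non-proportional. Reverse Cauchy--Schwarz gives $g(\nabla t_-,\nabla t_+)^2 > \Vert \nabla t_-\Vert_g^2\,\Vert \nabla t_+\Vert_g^2 \ge 0$, so the Gram matrix $M_{ij}:=g(\nabla t_i,\nabla t_j)$ (with indices in $\{-,+\}$) satisfies $\det M = -\Delta^2 < 0$. Hence $V$ is Lorentzian and its $g$-orthogonal complement $V^{\perp}$, which coincides with $\ker \dd t_-\cap \ker \dd t_+ = T\Sigma_{c_-,c_+}$ (because $\dd t_\pm(v)=g(\nabla t_\pm,v)$), is spacelike of dimension $n-2$. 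This yields the $g$-orthogonal splitting $g = g|_V \oplus g|_{V^{\perp}}$, and one sets $m:=g|_{V^{\perp}}$, obtaining the $C^0$ Riemannian metric on the leaves required by the proposition.

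To produce the explicit coefficients, note that the musical map $V\to V^*$, $v\mapsto g(v,\cdot)|_V$, sends $\nabla t_\pm$ to $\dd t_\pm|_V$, so $\{\dd t_-|_V,\dd t_+|_V\}$ is a basis of $V^*$ whose Gram matrix for $g^{-1}|_{V^*}$ is exactly $M$. Therefore the matrix of $g|_V$ in the dual basis inside $V$ is $M^{-1}$, and reading off the entries of $M^{-1}$ (using $\det M=-\Delta^2$) produces the stated identities for $\alpha,\beta,\gamma$. From these, $\alpha,\gamma\le 0$ follows from $g(\nabla t_\pm,\nabla t_\pm)\le 0$, $\beta\neq 0$ from $g(\nabla t_-,\nabla t_+)\ne 0$, and $\beta^2-\alpha\gamma = 1/\Delta^2 > 0$ by direct computation; the time-orientation claim is just the well-known fact that two causal non-proportional vectors share their time orientation iff their $g$-inner product is negative, i.e.\ iff $\beta>0$.

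For the converse, given (\ref{vow}) with $g$ Lorentzian and $m$ Riemannian, the $V$-block $-\alpha\dd t_-^2-2\beta\dd t_-\dd t_+-\gamma\dd t_+^2$ must itself be Lorentzian, forcing $\beta^2-\alpha\gamma>0$ (hence $\beta\ne 0$); inverting this $2\times 2$ block recovers $g(\nabla t_\pm,\nabla t_\pm)$ and $g(\nabla t_-,\nabla t_+)$ in terms of $\alpha,\beta,\gamma$, so that the hypothesis $\alpha,\gamma\le 0$ translates into the causality of $\nabla t_\pm$ and $\Delta^2>0$ into their non-proportionality. The explicit identities then hold by the same computation as in the forward direction.

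Finally, Lemma~\ref{llo} applied with its ``$\alpha$'' and ``$\beta$'' chosen as $\dd t_-$ and $\dd t_+$ says precisely that $g+2\,\dd t_-\dd t_+$ is positive definite iff (\ref{nine}) holds. For (\ref{nine})$\Leftrightarrow$(\ref{ten}), squaring (\ref{nine}) and using $g(\nabla t_-,\nabla t_+)^2 - \Vert \nabla t_-\Vert_g^2\Vert \nabla t_+\Vert_g^2 = \Delta^2$ reduces (\ref{nine}) to $2\beta\Delta^2 > 1+\Delta^2$, while rewriting (\ref{ten}) via $\alpha\gamma=\beta^2-1/\Delta^2$ produces the same inequality. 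The main technical obstacle I anticipate is the sign bookkeeping between $g$, $g^{-1}$, $\nabla t_\pm$, $\dd t_\pm$, and the convention $\Vert \nabla t\Vert_g^2 = -g(\nabla t,\nabla t)$; once this is settled, the proposition reduces cleanly to Lemma~\ref{llo} together with the reverse Cauchy--Schwarz inequality.
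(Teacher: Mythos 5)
Your proposal is correct and follows essentially the same route as the paper: both reduce everything to pointwise linear algebra on the timelike plane spanned by $\nabla t_-,\nabla t_+$ (the paper by solving the linear system expressing that $g+\alpha \dd t_-^2+2\beta \dd t_-\dd t_+ +\gamma \dd t_+^2$ annihilates that span, you by inverting the $2\times 2$ Gram matrix --- the same computation), and both obtain the equivalence of (\ref{nine}), (\ref{ten}) and the positive definiteness of $g+2\dd t_-\dd t_+$ from Lemma \ref{llo} together with $\beta^2-\alpha\gamma=1/\Delta^2$. One small remark: your matrix inversion actually yields $\alpha=-\Vert\nabla t_+\Vert_g^2/\Delta^2$ and $\gamma=-\Vert\nabla t_-\Vert_g^2/\Delta^2$, which agrees with the computation in the paper's own proof and indicates that the identities as printed in the statement have the $\pm$ labels of $\alpha$ and $\gamma$ interchanged; since $\alpha$ and $\gamma$ enter all subsequent claims symmetrically, this affects nothing.
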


Observe that under the proper isometric embedding assumptions of Theorem \ref{mia} (see item (b)), the locus $t_-^{-1}(c_-)\cap  t_+^{-1}(c_+)$ is compact. If $\dd t_-$ and $\dd t_+$ are not proportional over it, then it is a compact spacelike submanifold.

\begin{proof}
Assume that $t_-$ and $t_+$ have past-directed causal non-proportional gradients.
Let us consider the metric
\[
h=g+\alpha \dd t_-^2+2\beta \dd t_- \dd t_+ +\gamma \dd t_+^2 .
\]
We want to find functions $\alpha, \beta, \gamma$ in such a way that $h$ is annihilated by any vector in $\textrm{span} (\nabla t_-, \nabla t_+)$. Note that (\ref{vow}) can be written, with $m=h\vert_{\ker \dd t_-\cap \ker \dd t_+}$, iff these functions exist.
Let us calculate (we write for shortness $\Vert \nabla t\Vert^2_g=-g(\nabla t,\nabla t)=-g^{-1}(\dd t, \dd t)$ for a function with causal gradient).
\begin{align*}
h(\nabla t_-,\cdot)
&=[1-\alpha \Vert \nabla t_-\Vert^2_g + \beta g(\nabla t_-,\nabla t_+)] \dd t_-+[-\beta  \Vert \nabla t_-\Vert^2_g+\gamma g(\nabla t_-,\nabla t_+) ] \dd t_+
\end{align*}
Since this quantity, and the analogous quantity for $t_+$, must vanish, we have
\begin{align}
0&=1-\alpha \Vert \nabla t_-\Vert^2_g + \beta g(\nabla t_-,\nabla t_+), \label{one}\\
0&=1-\gamma \Vert \nabla t_+\Vert^2_g + \beta g(\nabla t_-,\nabla t_+),\\
0&=-\beta  \Vert \nabla t_-\Vert^2_g+\gamma g(\nabla t_-,\nabla t_+),\\
0&=-\beta  \Vert \nabla t_+\Vert^2_g+\alpha g(\nabla t_-,\nabla t_+) \label{four}
\end{align}

Observe that $\Delta^2>0$ by the reverse triangle inequality, and $g(\nabla t_-,\nabla t_+)\ne 0$ as these gradients are not proportional.
Multiplying the first equation by $\Vert \nabla t_+\Vert^2$ and using the last equation, we obtain
$\alpha \Delta^2=-\Vert \nabla t_+\Vert^2_g$ and analogously, $\gamma \Delta^2=-\Vert \nabla t_-\Vert^2_g$. Finally,  multiplying the first equation by $\Delta^2$ we get
$\beta \Delta^2=-g(\nabla t_-,\nabla t_+)$.  The identity  $(\beta^2-\alpha \gamma)\Delta^2=1$ follows easily. Note that the causality of $\nabla t_-$ and $\nabla t_+$ is related to the signs of $\alpha$ and $\gamma$ respectively, and that they have the same time orientation iff $-g(\nabla t_-,\nabla t_+)>0$ i.e. $\beta>0$. The  statement on the equivalence between (\ref{nine}) and (\ref{ten}) is now proved with a trivial calculation, while that on the equivalence between (\ref{nine}) and the positive definiteness of $g+\dd t_-\otimes \dd t_++\dd t_+\otimes \dd t_-$ follows from Lemma
\ref{llo}.

For the converse, suppose that Eq.\ (\ref{vow}) holds with $g$ Lorentzian and $\alpha,\gamma\le 0$.  The metric in square brackets must be Lorentzian, that is, its determinant must be negative, which gives $\alpha \gamma-\beta^2<0$, and hence $\beta\ne 0$. Applying  Eq.\ (\ref{vow}) to $\nabla t_-$ and then again to $\nabla t_+$ we get again Eqs.\ (\ref{one})-(\ref{four}). From the last two equations
\[
\beta^2 \Vert \nabla t_-\Vert ^2_g \Vert \nabla t_+\Vert ^2_g = \alpha \gamma g(\nabla t_-, \nabla t_+)^2<\beta ^2 g(\nabla t_-, \nabla t_+)^2
\]
and hence $\Delta^2>0$ and $g(\nabla t_-, \nabla t_+)\ne 0$. Following the same steps as before, $0\le -\alpha \Delta^2=\Vert \nabla t_+\Vert^2_g$ which implies that $\nabla t_+$ is causal, and similarly $\nabla t_-$ is causal. Now, $\Delta^2>0$ means that they are not proportional.
%
%
\end{proof}


\section{Conformal embeddings}

The goal of this section is to prove the following theorem which establishes that global hyperbolicity characterizes the causal structure of closed Lorentzian submanifolds of Minkowski spacetime.

Remember that $(M,g)$, $g\in C^k$, is $C^s$ conformally embeddable if there is a $C^k$ function $\Omega^2:M\to (0,\infty)$ such that $\tilde g:=\Omega^2 g=\phi^* \eta^{(N+1)}$, where $\phi$ is the $C^s$ embedding.

\begin{theorem} \label{kdp}
Let $(M,g)$ be an $n$-dimensional spacetime $(M,g)$, where $g$ is $C^k$, $k\in \mathbb{N}\backslash\{1,2\} \cup \{\infty\}$, and let $s:=k$ for $k\ge 3$; $s:=1$ for $k=0$.

The spacetime $(M,g)$ can be $C^s$ properly conformally embedded in Minkowski spacetime $\mathbb{L}^{N+1}$ for some $N>0$ iff it is globally hyperbolic (in other words the globally hyperbolic spacetimes are precisely the spacetimes that are conformally related to proper spacetimes). Moreover, for $k\ge 3$, if $S$ is a $C^{k+1}$ spacelike Cauchy hypersurface a conformal $C^k$ embedding $\phi\colon M\to \mathbb{L}^{N+1}$ can be  chosen such that $(\phi^0)^{-1}(0)=S$.
The integer $N$ can be chosen to be  $N(n,k)$, where $N(n,k)$ is the optimal value  for the isometric Riemannian problem.
\end{theorem}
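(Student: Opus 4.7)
The direction ($\Rightarrow$) is short. If $\phi\colon M\to\mathbb{L}^{N+1}$ is a proper $C^s$ conformal embedding with $\Omega^2 g=\phi^*\eta^{(N+1)}$, then $(M,\Omega^2 g)$ is a proper spacetime by Theorem~\ref{mia}; in particular it admits a smooth Cauchy temporal function (with complete Riemannian level sets), so it is globally hyperbolic. As global hyperbolicity depends only on the causal structure, $(M,g)$ is globally hyperbolic too.

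The content of the theorem lies in ($\Leftarrow$). Assume $(M,g)$ is globally hyperbolic and let $S$ be the given $C^{k+1}$ spacelike Cauchy hypersurface. My plan is to construct a smooth positive conformal factor $\Omega$ on $M$ such that $(M,\Omega^2 g)$ becomes a proper spacetime; Theorem~\ref{mia} will then supply a proper isometric embedding of $(M,\Omega^2 g)$, which is at the same time a proper conformal embedding of $(M,g)$. The first step is to invoke Bernal--S\'anchez to obtain a smooth Cauchy temporal function $t\colon M\to\mathbb{R}$ with $t^{-1}(0)=S$ and a splitting $M\cong\mathbb{R}\times S$ in which $g=-\beta\,\dd t^2+\gamma_t$, with $\beta>0$ smooth and $\{\gamma_t\}$ a smoothly $t$-dependent family of Riemannian metrics on $S$. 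I will verify condition (c) of Theorem~\ref{mia} for $(M,\Omega^2 g)$ using this same $t$, i.e.\ arrange that
\[
\tilde h:=\Omega^2 g+2\,\dd t^2
\]
be a complete Riemannian metric on $M$.

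The Riemannian condition for $\tilde h$ reduces, via the conformal rescaling behind Lemma~\ref{llo}, to the pointwise inequality $\Omega^2\beta<2$, equivalent to $\sqrt{2}\,t$ being strictly steep for $\Omega^2 g$. Completeness is the substantive issue. Any $\tilde h$-inextendible curve whose $t$-coordinate is unbounded has infinite $\tilde h$-length via the $2\,\dd t^2$-term alone, so one reduces to inextendible curves confined to a strip $\{a\le t\le b\}$; such a curve must escape spatially, so completeness of $\tilde h$ reduces to the slice metrics $(S_t,\Omega^2\gamma_t)$ being complete, uniformly in $t$ over compact intervals. I would construct $\Omega$ by combining a Nomizu--Ozeki conformal completion on one slice with a smooth $t$-parametrized partition of unity to propagate completeness across slices, while shrinking $\Omega$ wherever needed to keep $\Omega^2\beta<2$. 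The main obstacle is precisely this simultaneous satisfaction: the steepness bound forces $\Omega$ small where $\beta$ is large, whereas spatial completeness does not tolerate arbitrary decay at infinity on each slice. The reconciliation rests on the fact that any Riemannian manifold admits a conformal completion by an arbitrarily small conformal factor, which allows $\Omega$ to be shrunk freely without breaking spatial completeness.

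Once $\tilde h$ is complete Riemannian, the implication (c)$\Rightarrow$(a) of Theorem~\ref{mia} produces a proper $C^s$ isometric embedding $\phi\colon M\to\mathbb{L}^{N+1}$ of $(M,\Omega^2 g)$ whose first coordinate equals $\sqrt{2}\,t$, so $(\phi^0)^{-1}(0)=t^{-1}(0)=S$ as required. The improved dimension bound $N=N(n,k)$ is obtained by applying Nash to the spatial part of $\tilde h$: the conformal freedom removes the need for the extra coordinate that Theorem~\ref{mia} uses for the auxiliary proper companion function $F$, since $t$ already encodes temporal properness and the conformal rescaling itself achieves the spatial compression.
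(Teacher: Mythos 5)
Your ($\Rightarrow$) direction is fine and matches the paper in substance. The ($\Leftarrow$) direction, however, has a genuine gap at exactly the point you flag as ``the main obstacle''. Having fixed $t$ from the Bernal--S\'anchez splitting $g=-\beta\,\dd t^2+\gamma_t$, you must satisfy $\Omega^2\beta<2$ everywhere while keeping the slice metrics $\Omega^2\gamma_t$ complete, and your proposed reconciliation rests on the claim that ``any Riemannian manifold admits a conformal completion by an arbitrarily small conformal factor''. That claim is false: Nomizu--Ozeki-type completions require the factor to be allowed to grow at infinity, not to be majorized by an arbitrary prescribed positive function. Concretely, $\beta$ may blow up at spatial infinity on a slice so fast that the ceiling $\sqrt{2/\beta}$ is integrable along some divergent spatial ray of $\gamma_t$; then \emph{every} admissible $\Omega<\sqrt{2/\beta}$ assigns that ray finite length, the slice metric $\Omega^2\gamma_t$ is incomplete, and $\tilde h=\Omega^2g+2\,\dd t^2$ fails completeness on a strip. (Whether such behaviour is excluded by global hyperbolicity would itself require an argument --- one would have to relate the ``optical'' metric $\gamma_t/\beta$ to the Cauchy property of $t$, uniformly in $t$ over compact intervals, and your sketch neither states nor proves such a uniform estimate; note also that completeness of each slice is not by itself enough, since $\gamma_t$ varies with $t$.) The dimension claim $N=N(n,k)$ is likewise only gestured at.

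The paper avoids this tension by not fixing the temporal function before rescaling. It uses stability of global hyperbolicity to pick a globally hyperbolic $g'>g$ (with $S$ still Cauchy for $g'$ when $k\ge 3$), fixes a complete Riemannian metric $h$, and performs two pointwise conformal rescalings glued by partitions of unity so that, for $h$-unit vectors $v$: if $v$ is $g'$-causal then $\tilde g(v,v)>1$, and if $\tilde g(v,v)\le 1$ then $g'(v,v)<-1$. Only then does it take a Cauchy temporal function $t$ that is steep \emph{for the wider metric} $g'$ (of class $C^{k+1}$ with $t^{-1}(0)=S$ via Bernard--Suhr/Minguzzi). For any $h$-unit $v$, either $\tilde g(v,v)\ge 1$, or $v$ is strongly $g'$-timelike and $g'$-steepness gives $(\dd t(v))^2>1$; in both cases $\tilde g(v,v)+2(\dd t(v))^2\ge h(v,v)$, so $\tilde h\ge h$ is complete and Theorem \ref{mia}(c) applies. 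In short, the steepness is arranged with respect to an auxiliary metric with strictly wider cones chosen after the conformal rescaling, which is precisely what dissolves the steepness-versus-completeness conflict that your construction cannot resolve as written.
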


\begin{proof}
$\Rightarrow$.
 Let $\tilde g$ be conformally related to $g$, and let $(M,\tilde g)$ be a closed Lorentzian submanifold of $\mathbb{L}^{N+1}$ (for simplicity we omit the embedding $\phi$).
Every inextendible causal curve in  $M$ is inextendible and causal in $\mathbb{L}^{N+1}$, thus they have to intersect the set $(X^0)^{-1}(0)\cap M$ which is then a Cauchy hypersurface for $(M,\tilde g)$ and hence for $(M,g)$.

\begin{figure}[ht]
\centering
\includegraphics[width=9cm]{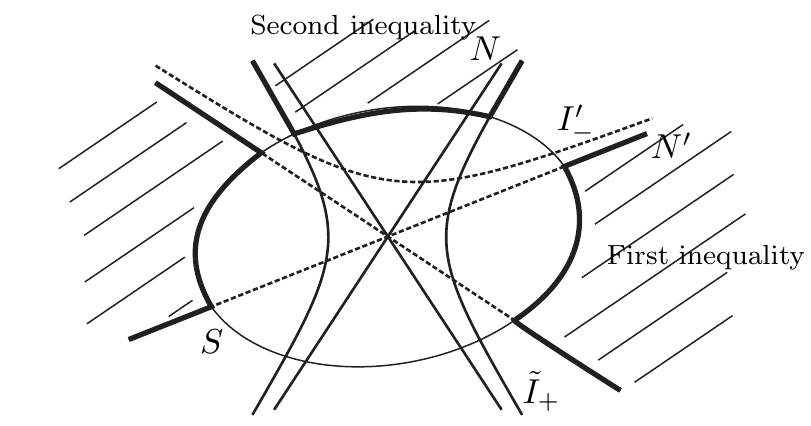}
\caption{The geometrical intuition behind the inequalities in Thm.\ \ref{kdp}. Here $S$ is the unit ball of $h$, $I_\pm=\{v\in T_pM\colon g(v,v)=\pm 1\}$ and similarly for $g'$, while $N$ and $N'$ are the null cones.  First, given $S$, $N$ and $N'$, rescale $g$ into $\tilde g$ so that $\tilde I_+$ does not contain $g'$-spacelike or $g'$-lightlike vectors of $h$-norm one. This gives the first inequality. Then rescale $g'$ so that $I'_-$ does not contain vectors $v$ such that $\tilde g(v,v)\le 1$ and $h(v,v)\ge 1$, this gives the second inequality.}
\end{figure}

$\Leftarrow$. Since global hyperbolicity is stable \cite{minguzzi11e} \cite{fathi12} \cite[Thm.\ 2.91]{minguzzi17} we can find $g'>g$ such that $(M,g')$ is globally hyperbolic and (for $k\ge 3$) $S$ is a Cauchy hypersurface for $(M,g')$.
Let $h$ be a complete Riemannian metric on $M$ and let $p\in M$. There are  sufficiently large $\lambda_1(p)>0$ such that for every $v\in T_pM$,
\begin{align*}
 g'(v,v) \ge 0 \ \textrm{ and } \  h(v,v)\ge 1 \  &\Rightarrow \  \lambda_1 g(v,v)> 1,
\end{align*}
Since the inequality on the right-hand side is strict, the same implication for the same value of $\lambda_1$ holds in a neighborhood $V_p$ of $p$.
Let us introduce a locally finite refinement $\{U_i\}$ and a partition of unity $\{\varphi_i\}$, and let us define the smooth positive function $\Omega_1:=\sum_i \lambda^i_1 \varphi_i$. Defining  $\tilde g:=\Omega_1 g$, we get for every $v\in TM$
\begin{align*}
 g'(v,v) \ge 0 \ \textrm{ and } \  h(v,v)\ge 1 \  &\Rightarrow \  \tilde g(v,v)> 1, \qquad (\textrm{first inequality}) .
\end{align*}
Similarly, for each $p$ and sufficiently large $\lambda_2(p)>0$ we have for  every $v\in T_pM$,
\begin{align*}
\tilde g(v,v)\le 1 \ \textrm{ and } \ h(v,v)\ge 1 \  &\Rightarrow \ \lambda_2 g'(v,v)<-1.
\end{align*}
Since the inequality on the right-hand side is strict, the same implication for the same value of $\lambda_2$ holds in a neighborhood $W_p$ of $p$.
By introducing a  partition of unity $\{\hat\varphi_i\}$, the function  $\Omega_2:=\sum_i \lambda^i_2\hat \varphi_i$, we find that for every $v\in TM$
\begin{align*}
\tilde g(v,v)\le 1 \ \textrm{ and } \ h(v,v)\ge 1 \  &\Rightarrow \ g'(v,v)<-1, \qquad (\textrm{second inequality}) ,
\end{align*}
where we redefined $\Omega_2 g'\to g'$ (it is still globally hyperbolic and such that $g'>\tilde g$).
%
As $(M,g')$ is globally hyperbolic it admits a smooth $g'$-temporal function  which is $g'$-steep, i.e.\ for every future-directed $g'$-causal vector $v$, $\dd t(v)\ge \sqrt{-g'(v,v)}$ \cite[Thm.\ 3.12]{minguzzi17}. 
Moreover, for $k\ge 3$, as $S$ is a $C^{k+1}$ spacelike 
Cauchy hypersurface for $(M,g')$, by \cite[Thm.\ 1]{bernard19} \cite[Thm.\ 2.5]{minguzzi19d} we can find $t$ as above but of regularity $C^{k+1}$ where additionally  $S=t^{-1}(0)$.

Let us prove that the $C^k$ metric $\tilde h=\tilde g+2 \dd t^2$ is a complete Riemannian metric by showing that $\tilde h\ge h$. Let $v$ be such that $h(v,v)=1$, if $\tilde{g}(v,v)\ge 1$ then clearly  $\tilde h(v,v)\ge 1$ and we have finished, if instead $\tilde{g}(v,v)< 1$ then from the second implication in display $g'(v,v)<-1$, in particular $v$ is $g'$-timelike, thus by $g'$-steepness of $t$,  $[\dd t(v)]^2\ge {-g'(v,v)}>1$ and hence $\tilde h(v,v)\ge 1$. By characterization (c) of Thm.\ \ref{mia}, $(M,\tilde g)$ is $C^s$ properly isometrically embeddable in Minkowski spacetime, thus $(M,g)$ is $C^s$ properly conformally embeddable in Minkowski spacetime. The proof of
$(c) \Rightarrow (a)$ in Thm.\ \ref{mia} shows that the zero coordinate of the embedding reads $\sqrt{2} t$ from which the last statement follows.
\end{proof}%
%

\renewcommand{\arraystretch}{1.5}

\begin{figure}
\begin{tabular}{ r|c|c| }
\multicolumn{1}{r}{}
 &  \multicolumn{1}{c}{Non-closed}
 & \multicolumn{1}{c}{Closed} \\
\cline{2-3}
Conformal & Stably causal   & Globally hyperbolic     \\
 &  (Thm.\ \ref{bys2})  &   (Thm.\ \ref{kdp}) \\
\cline{2-3}
Isometric &  Stable    & Proper \\
 &   (Thm.\ \ref{bys})   & (Thm.\ \ref{mia}) \\
\cline{2-3}
Isometric for   &  Globally hyperbolic   & None   \\
every $g' \in [g]$ &  (Rem.\ \ref{jvo})  &  (Rem.\ \ref{dob})  \\
\cline{2-3}
\end{tabular}
\caption{Equivalence between type of embedding and spacetime property. Here ``non-closed'' means that the embedding need not be closed. The second column is filled by the results of this work. Notice that: Proper $\Rightarrow$ globally hyperbolic $\Rightarrow$  stable $\Rightarrow$  stably causal. }
\end{figure}

%

\begin{remark} \label{dob}
The question analogous to that of Remark \ref{jvo} but for closed embeddings is: What are the conformal structures $(M,[g])$ such that for every $\tilde g\in [g]$, $(M,\tilde g)$ is closely isometrically embeddable in Minkowski spacetime?

It has to be globally hyperbolic so for some $g$  in the conformal class it is diffeomorphic to $\mathbb{R}\times S$, with metric $g=-\beta \dd t^2+h_t$ \cite{bernal04}. However, we can also consider the representative of the conformal class obtained multiplying by $\beta^{-1}$, i.e.\ $\tilde g=-\dd t^2+\tilde h_t$. Let $\varphi\colon M\to (0,\infty)$ be such that for every $x\in S$, $\varphi(\cdot,x)$ reaches a maximum at $t=0$ and in the sense of bilinear forms $\varphi_t h_t\le 2 \varphi_0 h_0$, where $\varphi <1$ is so small for $t=0$, that $ (S,2\varphi_0 h_0)$ is incomplete. Then, since every Cauchy hypersurface on $M$ is a graph over $S$, $(M,\hat g)$, with $\hat g= \varphi \tilde g$, has the property that the projection to $(S, 2 \varphi_0 h_0)$ of any Cauchy hypersurface (endowed with the induced metric) is non-contracting thus every Cauchy hypersurface is incomplete in the metric induced from $\hat g$.

However, any closely isometrically embeddable spacetime has a foliation of Cauchy hypersurfaces that are complete in the induced Riemannian metric. This shows that $(M,\hat g)$ is not closely isometrically embeddable. In conclusion the answer to the initial question is: none.
\end{remark}

\section{Some consequences of the embedding existence} \label{bjd}

Thanks to the embedding it is possible to obtain a number of interesting results on the representation of local (or pointwise) properties by means of special time functions. Observe that the latter objects express global properties of the spacetime, so this is a kind of local-global relationship.

In the following result by `smooth' we really mean $C^s$ where $s$ is the constant expressing the regularity of the embedding as introduced in the previous theorems.
\begin{proposition}
Let $(M,g)$ be a stably causal spacetime and let $v\in T_pM$ be  a past-directed lightlike vector. Then there is a smooth  function $f\colon M\to \mathbb{R}$, having past-directed causal  gradient  $\nabla f$ such that $\nabla f(p)=v$.
\end{proposition}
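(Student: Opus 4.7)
\textit{Plan.} The approach is to exploit the conformal embedding into Minkowski spacetime from Theorem~\ref{bys2} and to pull back a linear function with the desired constant gradient.

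Since $(M,g)$ is stably causal, Theorem~\ref{bys2} supplies a $C^{s}$ conformal embedding $\phi\colon M\to\mathbb{L}^{N+1}$ with $g=\Omega^{2}\phi^{*}\eta^{(N+1)}$ for some positive function $\Omega$. Let $V_0:=\phi_{*}v\in T_{\phi(p)}\mathbb{L}^{N+1}$; since conformal embeddings preserve the causal character and time orientation, $V_0$ is past-directed lightlike. Set $c:=\Omega(p)^{2}$ and introduce the smooth linear function
\[
F\colon \mathbb{L}^{N+1}\to\mathbb{R},\qquad F(X):=c\,\eta^{(N+1)}\!\bigl(V_0,\,X-\phi(p)\bigr),
\]
whose Minkowski gradient is the parallel past-directed null vector field $cV_0$. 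Define $f:=F\circ\phi$, which is $C^{s}$, i.e.\ ``smooth'' in the sense of this section.

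At $p$ a direct computation gives $\dd f_p(w)=c\,\eta^{(N+1)}(\phi_{*}v,\phi_{*}w)=c\,\Omega(p)^{-2}g(v,w)=g(v,w)$ for every $w\in T_pM$, so $\nabla^{g}f(p)=v$. For an arbitrary $q\in M$ and an arbitrary future-directed $g$-causal $w\in T_qM$, the vector $\phi_{*}w$ is future-directed causal in $\mathbb{L}^{N+1}$, hence $\dd f_q(w)=c\,\eta^{(N+1)}(V_0,\phi_{*}w)\ge 0$; this forces $\nabla^{g}f(q)$ to be past-directed causal or zero.

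The main obstacle is to exclude the possibility $\nabla^{g}f(q)=0$. I would decompose $cV_0=A_q+N_q$ with $A_q\in\phi_{*}(T_qM)$ and $N_q$ in the $\eta^{(N+1)}$-orthogonal complement of $\phi_{*}(T_qM)$; this splitting is well defined because $\phi(M)$ is a timelike, hence Lorentzian, submanifold. Substituting in the identity for $\dd f$ yields $\nabla^{g}f(q)=\Omega(q)^{-2}\phi_{*}^{-1}(A_q)$, so it suffices to show $A_q\ne 0$. But the normal bundle of a Lorentzian submanifold of $\mathbb{L}^{N+1}$ is spacelike, whereas $cV_0$ is null and nonzero; hence $cV_0\notin N_{\phi(q)}\phi(M)$ and $A_q\ne 0$. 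Together with the previous paragraph, this shows that $\nabla^{g}f$ is past-directed causal throughout $M$ and equals $v$ at $p$.
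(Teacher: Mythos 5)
Your proof is correct and follows essentially the same route as the paper: pull back, through the conformal embedding supplied by Thm.~\ref{bys2}, a linear function on $\mathbb{L}^{N+1}$ whose constant Minkowski gradient is a past-directed null vector proportional to $\phi_*v$, and use that its differential is nonnegative on the future causal cones. The only differences are cosmetic: you normalize by $\Omega(p)^2$ so that $\dd f_p=g(v,\cdot)$ holds by direct computation, where the paper instead invokes the equality case of the reverse Cauchy--Schwarz inequality and adjusts a constant $b$, and you spell out (via the spacelike normal bundle) why $\nabla f$ never vanishes, a point the paper leaves implicit.
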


\begin{proof}
We know that there is a function $\Omega^2:M\to \mathbb{R}$ such that  $(M,\Omega^2 g)$ can be regarded as a closed Lorentzian submanifold of $\mathbb{L}^{N+1}$. The vector $v$ reads  $v= a (-\p_0+{ u})$, $a>0$, where $\dd X^0(u)=0$, and $\Vert u \Vert_E= 1$. But then the function $F=X^0+\sum_{i=1}^N X^i u_i$ is  such that $\dd F$ is non-negative on any future causal cone of $\mathbb{L}^{N+1}$ and such that $ \dd F(v)=0$ at $p$, thus $f=b F\vert_M$, for some constant $b>0$, has the required properties. Indeed, $\dd f$ is non-negative over the future causal cones (which are the same for $g$ and $\Omega^2 g$), which implies that $\nabla^g f$ is past-directed causal. But the equality case of the reverse triangle inequality and $\dd f(v)=0$ at $p$ imply that $\nabla^g f$ is proportional to $v$ at $p$, equality being obtained adjusting $b$.
\end{proof}

\begin{proposition} \label{vix}
Let $(M,g)$ be a stable spacetime and let $v\in T_pM$ be a unit past-directed timelike vector. Then there is a smooth  steep temporal function $f\colon M\to \mathbb{R}$, such that  $\nabla f(p)=v$. If the spacetime is proper then $f$ can be chosen Cauchy.
\end{proposition}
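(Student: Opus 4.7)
The plan is to parallel the construction of the preceding proposition, now using a linear Minkowski function whose Minkowski gradient is the prescribed timelike vector $v$. First, invoke Theorem \ref{bys} to isometrically embed $(M,g)$ into some $\mathbb{L}^{N+1}$ and, under the additional assumption that $(M,g)$ be proper, take the embedding to be closed via Theorem \ref{mia}. Identify $M$ with its image, so that $v\in T_pM$ is simultaneously a unit past-directed timelike vector in $T_p\mathbb{L}^{N+1}$, with $\eta^{(N+1)}(v,v)=-1$.

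Define the smooth linear function $F\colon \mathbb{L}^{N+1}\to \mathbb{R}$ by $F(X):=\eta^{(N+1)}(v,X)$, whose Minkowski gradient is the constant vector field $v$. Because $v$ is past-directed, unit, and timelike, the reverse Cauchy--Schwarz inequality applied to $v$ and an arbitrary future-directed causal $W\in T\mathbb{L}^{N+1}$ gives $\dd F(W)=\eta^{(N+1)}(v,W)\ge \Vert W\Vert_\eta$, with strict inequality unless $W$ is proportional to $v$; so $F$ is smooth, temporal, and steep on Minkowski spacetime. Setting $f:=F|_M$, steepness and temporality of $f$ on $(M,g)$ are inherited at once, since every future $g$-causal vector is future $\eta^{(N+1)}$-causal of the same norm. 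For the gradient at $p$ I would compute $\dd f_p(w)=\eta^{(N+1)}(v,w)=g(v,w)$ for every $w\in T_pM$, where the second equality uses the key fact that $v\in T_pM$ together with $g=\eta^{(N+1)}|_{TM}$; hence $\nabla^g f(p)=v$.

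For the Cauchy assertion, assume $(M,g)$ is proper and the embedding is closed. Any inextendible causal curve $\gamma$ in $M$ is then inextendible in $\mathbb{L}^{N+1}$, since closedness of $\phi(M)$ rules out boundary limit points. On $\gamma$ the composition $F\circ\gamma$ is monotone (because $\dd F$ is positive on future causal directions), and by a standard slab argument a future causal curve of $\mathbb{L}^{N+1}$ confined between two parallel spacelike hyperplanes is trapped in a bounded region, contradicting inextendibility. Therefore $F\circ\gamma$ ranges over all of $\mathbb{R}$, which is the Cauchy condition for $f$. The only conceivable subtlety is verifying $\nabla^g f(p)=v$ despite the constant vector field $\nabla^\eta F=v$ not being globally tangent to $M$; but this is a purely pointwise statement at $p$ and reduces to the trivial observation that membership $v\in T_pM$ identifies $\eta^{(N+1)}(v,\cdot)$ and $g(v,\cdot)$ on $T_pM$, making the linear-function trick work without any local perturbation.
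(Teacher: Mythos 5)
Your proof is correct and follows essentially the same route as the paper: your linear function $F(X)=\eta^{(N+1)}(v,X)$ is exactly the coordinate $X^0$ of the Lorentz frame the paper boosts to, restricted to $M$, and the Cauchy claim is handled the same way, via inextendibility of causal curves in the ambient Minkowski spacetime (your slab argument makes explicit what the paper leaves implicit). The only, harmless, difference is that you obtain $\nabla^g f(p)=v$ directly from $v\in T_pM$ and $g=\eta^{(N+1)}\vert_{TM}$, whereas the paper deduces it from the equality case of the reverse Cauchy--Schwarz inequality combined with steepness.
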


\begin{proof}
The spacetime $(M,g)$ can be regarded as a Lorentzian submanifold of $\mathbb{L}^{N+1}$. The vector $v$ reads  $v= a (-\p_0+{ u})$, $a>0$, where $\dd X^0(u)=0$, and $\Vert u \Vert_E<1$. We can actually change the coordinates of $\mathbb{L}^{N+1}$ through a Lorentz transformation preserving the time orientation, in such a way that, by using the new coordinates, we have the same expressions as above but with $u=0$, $a=1$. Then, still using the new $X^0$, $f=X^0\vert_M$ provides the smooth steep temporal function. Observe that at $p$, $g(\nabla f, v)=\dd f(v)=\dd X^0(-\p_0)=-1$. But the reverse triangle inequality is $-g(\nabla f, v)\ge \Vert \nabla f\Vert_g \Vert v\Vert_g$. Since by the steepness condition $\Vert \nabla f\Vert_g \ge 1$, and by assumumption $\Vert v\Vert_g=1$,  we are in the equality case of the reverse triangle inequality, which implies $\nabla f \propto v$, and moreover $\Vert \nabla f\Vert_g = 1$, which imply $\nabla f=v$.
We already observed in previous proofs that if the spacetime is proper the function $f=X^0\vert_M$ is Cauchy as every inextendible casual curve in $M$ is inextendible in $\mathbb{L}^{N+1}$.
\end{proof}

We have the following interesting representation formula for the metric in terms of steep temporal functions. It is a kind of infinitesimal analog of the distance formula \cite[Thm.\ 4.6]{minguzzi17}.

%
%

\begin{theorem}
Let $(M,g)$ be a stable spacetime. A vector $v\in TM$ is timelike iff $\inf_f \vert\dd f(v)\vert^2\ne 0$, where $f$ runs over the smooth steep temporal functions, and in this case
\begin{equation} \label{oet}
g(v,v)=-\inf_f \vert\dd f(v)\vert^2,
\end{equation}
in fact the infimum is attained. Moreover, for $v$ spacelike we have $\inf_f \dd f(v)=-\infty$. If the spacetime is proper then the functions $f$ can also be demanded to be Cauchy.
\end{theorem}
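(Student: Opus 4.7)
The plan is to exploit the isometric embedding $\phi\colon M\to\mathbb{L}^{N+1}$ provided by Theorem \ref{bys} (which can be taken proper in the proper case) in order to produce an abundant family of smooth steep temporal functions parametrized by time-orientation-preserving Lorentz transformations $L\in\mathrm{SO}^+(1,N)$, namely $f_L:=X^0\circ L\circ\phi$. Each $f_L$ will be smooth and steep temporal because $\dd X^0$ is unit past-timelike on $\mathbb{L}^{N+1}$ and $L$ preserves both $\eta^{(N+1)}$ and the time orientation; moreover each $f_L$ will be Cauchy whenever $\phi$ is proper, by the same closedness argument used in the proof of Theorem \ref{mia}. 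The main obstacle, once this family is available, is to show that it is rich enough (together with Proposition \ref{vix}) to realize the claimed infima in every causal class.

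\textbf{Lower bound on timelike vectors.} First I would establish that $|\dd f(v)|^2\ge -g(v,v)$ for every smooth steep temporal $f$ and every timelike $v$. For $v$ future-directed timelike and $\nabla f$ past-directed timelike (opposite time orientations), the reverse Cauchy--Schwarz inequality yields $\dd f(v)=g(\nabla f,v)\ge\|\nabla f\|_g\|v\|_g$, and combining with the steepness bound $\|\nabla f\|_g\ge 1$ gives $\dd f(v)\ge\sqrt{-g(v,v)}$; the past-directed case is symmetric. Hence $\inf_f|\dd f(v)|^2\ge -g(v,v)>0$ on timelike vectors.

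\textbf{Attainment on timelike vectors.} To saturate the lower bound at $p\in M$, I would invoke Proposition \ref{vix}: given $v\in T_pM$ future-directed timelike, the unit past-directed timelike vector $w:=-v/\|v\|_g$ is realized as $\nabla f(p)$ for some smooth steep temporal $f$ (Cauchy in the proper case), whence at $p$ we get $\dd f(v)=g(w,v)=\|v\|_g$ and $|\dd f(v)|^2=-g(v,v)$. The past-timelike case is symmetric.

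\textbf{Non-timelike vectors.} For $v$ of non-timelike causal character at $p$, $\phi_*v$ inherits the same character in $\mathbb{L}^{N+1}$, and explicit Lorentz boosts will drive $\dd f_L(v)=\dd X^0(L\phi_*v)$ to the desired limit. Concretely, if $v$ is future-directed null with $\phi_*v=\beta(\p_0+\hat u)$ and $\|\hat u\|_E=1$, then the boost of rapidity $\eta$ along $\hat u$ gives $\dd X^0(L_\eta\phi_*v)=\beta e^{-\eta}\to 0$ as $\eta\to+\infty$; if $v$ is spacelike with $\phi_*v=(v^0,\vec v)$ and $|\vec v|_E>|v^0|$, then the boost of rapidity $\eta$ along $\vec v/|\vec v|_E$ gives $\dd X^0(L_\eta\phi_*v)=\cosh\eta\,(v^0-|\vec v|_E\tanh\eta)\to -\infty$. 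Symmetric arguments handle the past-directed null case and the orthogonal boosts in the spacelike case show the squared infimum is actually zero. Together with the timelike analysis, this yields the iff characterization, the identity $g(v,v)=-\inf_f|\dd f(v)|^2$ with infimum attained, and $\inf_f\dd f(v)=-\infty$ for spacelike $v$; all functions produced are Cauchy when $\phi$ is proper.
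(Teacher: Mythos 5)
Your proposal is correct and follows essentially the same route as the paper: restrictions to $M$ of the ambient coordinate $X^0$ in suitably Lorentz-transformed canonical frames (your family $f_L$) give the steep temporal functions realizing the null and spacelike cases via boosts, while the timelike case combines the steepness lower bound with attainment through Proposition \ref{vix}, and properness of the embedding yields the Cauchy property exactly as in the paper.
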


\begin{proof}
The spacetime $(M,g)$ can be regarded as a Lorentzian submanifold of $\mathbb{L}^{N+1}$. The vector $v\in TM$ can be regarded as an element of $T\mathbb{L}^{N+1}$. If it is spacelike, we can choose the canonical coordinates of $\mathbb{L}^{N+1}$ in such a way that $\dd X^0(v)=0$. Then $f=X^0\vert_M$ provides a smooth steep temporal function such that $\dd f(v)=0$. Thus if $v$ is a spacelike vector $\inf_f \vert\dd f(v)\vert^2= 0$.

For $v$ spacelike we can also choose canonical coordinates in $\mathbb{L}^{N+1}$ in such a way that, in Euclidean terms, $v$ is as long as desired and close to the past lightlike cone. This is accomplished by boosting in the `opposite direction' of $v$ (in this qualitative description I assume some familiarity of the reader with the Lorentz transformations). As a result $\dd f(v)=\dd X^0(v)$ can be made as negative as desired.

If $v$ is a lightlike vector then $v=a(\p_0+u)$, $a>0$, $\Vert u\Vert_E=1$. But canonical coordinates on $\mathbb{L}^{N+1}$ can be chosen in such a way that $a$ is as small as desired. For any choice of canonical coordinates $f=X^0\vert_M$ provides a  smooth steep temporal function such that $\dd f(v)=a \dd X^0(-\p_0)=-a$, $\vert \dd f(v)\vert=a$, which proves that $\inf_f \vert\dd f(v)\vert^2= 0$.

Let us prove formula (\ref{oet}) for a timelike vector, this will also prove the first statement of the theorem. It is sufficient to prove it for $v$ future-directed, as the expressions on both sides do not depend on the time orientation.

If $f$ is steep and $v$ is future-directed timelike  then $\dd f (v) \ge \Vert v\Vert_g$ which reads $\vert \dd f(v)\vert^2\ge - g(v,v)$. Thus  we have the inequality $-g(v,v) \le  \inf_f \vert\dd f(v)\vert^2$.

Observe that $-v/\Vert v\Vert_g$ is a unit past-directed  timelike vector thus, by Prop.\ \ref{vix}, there is a smooth steep temporal function $f$ such that $\nabla f= -v/\Vert v\Vert_g$, which implies $\dd f(v)=g(\nabla f,v)=-g(v,v)/\Vert v\Vert_g=\Vert v\Vert_g$, hence formula (\ref{oet}) where the infimum is attained. The proof of the last statement goes as in the previous proofs.
\end{proof}

\section{Conclusions}

We proved that the globally hyperbolic conformal structures are precisely the conformal structures of timelike closed submanifolds of Minkowski spacetime (Thm.\ \ref{kdp}).

We considered the problem of characterizing the Lorentzian manifolds closely isometrically embeddable in Minkowski spacetime by means of intrinsic properties.
We found two possible characterizations and obtained further properties of these spacetimes that might lead to further characterizations (Thm.\ \ref{mia}). 
We also studied the form of the metric for these spaces (Prop.\ \ref{pro}).

It is interesting to observe that these spacetimes, which we called proper, are even nicer than globally hyperbolic spacetimes. It is expected that they might play some role in Physics, though this possibility has yet to be explored and clarified.

The advantage of the closed conformal  embedding with respect non-closed ones is that inextendible causal curves are mapped to inextendible causal curves, which often allows one to use causality results on $\mathbb{L}^{N+1}$ to infer results on $(M,g)$. The causal boundary of $(M,g)$ can be identified with a subset of the Penrose's conformal boundary of Minkowski spacetime which might also lead to interesting simplifications.

In general, the solution to the embedding problem might lead to  new strategies to attack causality and metric problems for spacetimes.
For instance, some local-global results can  be easily obtained via the embedding. We provided an example in Sec.\ \ref{bjd} where we proved an infinitesimal distance formula.


\end{document}